\newcommand\lineWidth{0.65pt}
\newcommand\markSize{2.25}
\pgfplotsset{
/pgfplots/short line/.style={
    legend image code/.code={
        \draw[mark repeat=2,mark phase=2,#1] 
            plot coordinates {(0cm,0cm) (0.1275cm,0cm) (0.245cm,0cm)};
    },
},
}
\newdimen\R
\pgfplotsset{compat=newest}
\let\oldnl\nl%
\newcommand{\nonl}{\renewcommand{\nl}{\let\nl\oldnl}}%
\patchcmd\algocf@Vline{\vrule}{\vrule \kern-0.4pt}{}{}
\patchcmd\algocf@Vsline{\vrule}{\vrule \kern-0.4pt}{}{}
\tikzset{>=latex}
\DeclareMathAlphabet{\mathcal}{OMS}{cmsy}{m}{n}
\DeclareMathAlphabet{\mathbb}{U}{msb}{m}{n}
\DeclareMathOperator{\w}{\textit{w}_\mathrm{H}}
\DeclareMathOperator{\wmin}{\textit{w}_\mathrm{min}}
\DeclareMathOperator{\Awmin}{\textit{A}_{\textit{w}_\mathrm{min}}}
\DeclareMathOperator{\dmin}{\textit{d}_\mathrm{min}}
\DeclareMathOperator{\Admin}{\textit{A}_{\textit{d}_\mathrm{min}}}
\DeclareMathOperator{\C}{\mathcal{P}(\mathcal{I}, \boldsymbol{T})}
\DeclareMathOperator{\Ci}{\mathcal{P}_\textit{i}(\mathcal{I}, \boldsymbol{T})}
\DeclareMathOperator{\RM}{\mathcal{RM}}
\newcommand{\ZZ}{\mathbb{Z}}
\newcommand{\LB}{\left(}
\newcommand{\RB}{\right)}
\newcommand{\ie}{i.\,e.,\ }
\newcommand{\eg}{e.\,g.,\ }
\newcommand\mmsq[1]{\SI{#1}{\milli\meter\squared}}
\newcommand\dB[1]{\SI{#1}{\decibel}}
\newcommand\nm[1]{\SI{#1}{\nano\meter}}
\definecolor{mittelblau}{RGB}{0, 126, 198}
\definecolor{violettblau}{cmyk}{0.9, 0.6, 0, 0}
\definecolor{rot}{RGB}{238, 28 35}
\definecolor{apfelgruen}{RGB}{140, 198, 62}
\definecolor{gelb}{RGB}{255, 229, 0}
\definecolor{orange}{RGB}{244, 111, 33}
\definecolor{pink}{RGB}{237, 0, 140}
\definecolor{lila}{RGB}{128, 10, 145}
\definecolor{hellgrau}{RGB}{224, 224, 224}
\definecolor{mittelgrau}{RGB}{128, 128, 128}
\definecolor{dunkelgrau}{RGB}{80,80,80}
\definecolor{anthrazit}{RGB}{19, 31, 31}
\definecolor{darkgreen}{RGB}{34,139,34}
\definecolor{aqua}{RGB}{0, 255, 255}
\definecolor{turquoise}{RGB}{3, 148, 156}
\definecolor{neuesgruen}{RGB}{61, 173, 65}
\definecolor{dunklereshellgrau}{RGB}{176, 176, 176}
\definecolor{lightgray}{RGB}{211,211,211}
\definecolor{neuesgelb}{RGB}{255,160,0}
\definecolor{neuescyan}{RGB}{69,185,224}
\definecolor{neuesmagenta}{RGB}{197,67,143}
\definecolor{grape}{HTML}{6945C2}
\definecolor{tyrianpurple}{HTML}{6D7BFF}
\definecolor{navyblue}{HTML}{172370}
\definecolor{bluegrotto}{HTML}{009277}
\definecolor{redB}{HTML}{CE054A}
\definecolor{rptured}   {HTML}{e31b4c}
\definecolor{rptublue}  {HTML}{042c58}
\definecolor{rptulblue} {HTML}{6ab2e7}
\definecolor{rptugreen} {HTML}{26d07c}
\definecolor{rptuorange}{HTML}{ffa252}
\newtheorem{theorem}{Theorem}
\newtheorem{remark}{Remark}
\newtheorem{example}{Example}
\newcommand\toremove[1]{} %
\newcommand\ourTitle{Row-Merged Polar Codes:\\Analysis, Design and Decoder Implementation}
\newcommand\todo[1]{\textcolor{orange}{#1}}
\newcommand\comaz[1]{{\textcolor{mittelblau}{#1}}}
\begin{document}

\begin{NoHyper}
\title{\ourTitle}

\author{%
    Andreas~Zunker,
    Marvin~Geiselhart~\IEEEmembership{Graduate Student Member,~IEEE,}\\
    Lucas~Johannsen,~\IEEEmembership{Graduate Student Member,~IEEE,}
    Claus~Kestel,~\IEEEmembership{Graduate Student Member,~IEEE,}\\
    Stephan~ten~Brink,~\IEEEmembership{Fellow,~IEEE,}
    Timo~Vogt, and
    Norbert~Wehn,~\IEEEmembership{Senior Member,~IEEE,}%
    \thanks{
        Andreas Zunker, Marvin Geiselhart and Stephan ten Brink are with the
        Institute of Telecommunications, 
        University of Stuttgart, 70569 Stuttgart, Germany
        (e-mail:  zunker@inue.uni-stuttgart.de; geiselhart@inue.uni-stuttgart.de; 
        tenbrink@inue.uni-stuttgart.de)
    }%
    \thanks{
        Lucas Johannsen, Claus Kestel and Norbert Wehn are with the
        Microelectronic Systems Design Research Group,
        RPTU Kaiserslautern-Landau, 67663 Kaiserslautern, Germany 
        (e-mail: lucas.johannsen@rptu.de; kestel@rptu.de; 
        norbert.wehn@rptu.de)
    }%
    \thanks{
        Lucas Johannsen and Timo Vogt are with the 
        Faculty of Engineering, 
        Koblenz University of Applied Sciences, 56075 Koblenz, Germany
        (e-mail: johannsen@hs-koblenz.de; vogt@hs-koblenz.de)
    }%
    \thanks{
    This work is supported by the German Federal Ministry of Education and Research (BMBF) within the project Open6GHub (grant no. 16KISK019 and 16KISK004).
    }
}

\maketitle

\begin{abstract}
    Row-merged polar codes are a family of \acp{PTPC} with little precoding overhead.
Providing an improved distance spectrum over plain polar codes, they are capable to perform close to the finite-length capacity bounds.
However, there is still a lack of efficient design procedures for row-merged polar codes.
Using novel weight enumeration algorithms with low computational complexity,
we propose a design methodology for row-merged polar codes that directly considers their minimum distance properties.
The codes significantly outperform state-of-the-art \ac{CRC}-aided polar codes under \ac{SCL} decoding in error-correction performance. 
Furthermore, we present \ac{FSSCL} decoding of \acp{PTPC}, based on which we derive a high-throughput, unrolled architecture template for fully pipelined decoders.
Implementation results of \ac{SCL} decoders for row-merged polar codes in a 12\,nm technology additionally demonstrate the superiority of these codes with respect to the implementation costs, compared to state-of-the-art reference decoder implementations.

\end{abstract}
\acresetall

\begin{IEEEkeywords}
Row-merged polar codes, PAC codes, weight enumeration function, SCL decoding, 6G
\end{IEEEkeywords}

\section{Introduction}

\IEEEPARstart{P}{olar} codes, originally introduced by Arıkan, have been shown to asymptotically achieve channel capacity \cite{ArikanMain}. 
However, their performance degrades significantly in the short and medium block length regimes. 
This degradation can be attributed to two main factors: first, the use of suboptimal \ac{SC} decoding, and second, a poor distance spectrum. 
To improve decoding performance, list decoding has been proposed \cite{talvardyList}, which can approach the \ac{ML} performance of the code for a sufficiently large list size. 
Since then, designs specifically tailored for \ac{SCL} decoding have been suggested, taking into account the minimum distance properties \cite{RMurbankePolar}, \cite{HybridTse}. 
Ultimately, the performance of plain polar codes and the related \ac{RM} codes remains limited by their achievable distance spectrum.
In particular, polar codes have a poor minimum distance $\dmin$, whereas \ac{RM} codes have a usually sufficient $\dmin$, but a large number of minimum-weight codewords $\Admin$.

A key advancement in improving the distance spectrum by removing low-weight codewords is the introduction of precoding. 
The resulting code is a subcode of the polar code \cite{subCodes} and will be referred to as \ac{PTPC} hereafter.
Proper design of the pre-transform results in very powerful coding schemes that can be decoded by \ac{SCL} with little complexity overhead. 
Several different precoders have been proposed in the past:
In \cite{talvardyList} and \cite{NuiChen2012crc}, \ac{CRC} codes have been suggested as pre-transforms to prevent incorrect decisions of the \ac{SCL} decoder. %
Similarly, the authors of \cite{DynamicFrozenBits} and \cite{ParityCheckPolarCode} propose parity checks, which can be seen as dynamic frozen bits.
Outer convolutional codes have been proposed in \cite{arikan2019pac}, resulting in so-called \ac{PAC} codes, which can be also decoded using the list decoder \cite{Rowshan2021listpac}.
Furthermore, concatenated  \cite{BP_flexible} and multiple concatenated (``deep'') polar codes \cite{choi2023deeppolar} have been proposed.
Very recently, it has been shown that also repetition codes are effective in eliminating minimum-weight codewords with very low implementation complexity; the resulting code is a row-merged polar code \cite{GelincikRowMerged}. Note that information bit repetition was previously introduced in \cite{RowshanRepetition2019}, however, to be used in iterative decoding rather than to directly assist a single-pass \ac{SCL} decoder in the form of dynamic frozen bits.

\toremove{
The statistical method proposed in  \cite{Li2021Weightspectrum} investigates the improvement of the distance spectrum by ensembles of random precoders. This method is extended to an asymptotic analysis in \cite{li2023weightspecturmimprovement}.
In contrast, deterministic/exact methods analyze a concrete realization of a plain or pre-transformed polar code.
For plain polar codes, \cite{bardet_polar_automorphism} proposes to use symmetries (permutations) of polar codes to generate all minimum-weight codewords from minimum-weight generators. While of low complexity, the algorithm is limited to polar codes following the partial order.
This method is extended to obtain the full weight distribution of polar codes up to length 128 \cite{Yao2021polarweightdistribution}.
\cite{Rowshan2023Impact} proposes to count minimum-weight codewords by finding combinations of rows that generate them. Although for plain polar codes, this method is related to \cite{bardet_polar_automorphism} as shown in \cite{RowshanFormation}, it is also applicable to \acp{PTPC} such as \ac{PAC} codes. For \ac{PAC} codes specifically, the impact of pre-transformation on the number of minimum-weight codewords is investigated in \cite{Rowshan2023Impact}.
While more complex, algorithms for more terms of the weight distribution are proposed in \cite{Miloslavskaya2022partialweightdistribution} and \cite{PartialEnumPAC}.
Due to the unfavorable scaling of these algorithms, however, it is not feasible yet to compute the number of minimum-weight codewords for long polar codes \cite{li2023weightspecturmimprovement}.
\toremove{Note that there also exist probabilistic/non-exact methods, such as the use of an \ac{SCL} decoder with large list size to decode a noisy all-zero codeword at high \ac{SNR} \cite{li2012sclcount}.}
}

Besides the code properties including the error-correction performance, the
hardware implementation costs of respective decoders are important for
practical applications. To mitigate the drawback of low throughput and high
latency of \ac{SC}-based decoders, caused by the serial nature of the decoding
algorithm, several algorithmic and architectural optimizations were proposed in
the past. With respect to \ac{SCL} decoding \cite{talvardyList}, the
introduction of \ac{LLR}-based \cite{balatsoukas2015LLRSCL} and \ac{FSSCL}
decoding \cite{sarkis2016FastList, hashemi2017flexfastsscl} directly improve the implementation metrics of the corresponding decoder
hardware on the algorithmic
level. Unrolled and pipelined decoder architectures achieve high throughput
and are very efficient since they are data flow dominated and minimize the
control flow \cite{schlaefer2013unrolledLDPC, giard2016unrolledList,
kestel2020unrolledList}. Optimized candidate generation in substituent nodes of
the decoding tree further reduces the implementation costs
\cite{johannsen2022rate1, johannsen2022spc}.

Our main contributions are summarized as follows:
\begin{itemize}
    \item We propose a novel greedy design algorithm for row-merged polar codes that optimizes the number of minimum-weight codewords using fast weight enumeration methods proposed in \cite{ZunkerTreeIntersection} and \cite{PartialEnumPAC}.
    The designed codes have a smaller number of minimum-weight codewords than the ones in \cite{GelincikRowMerged}, where a heuristic is used instead.
    \item We propose a \ac{FSSCL} decoding algorithm for \acp{PTPC} and its fully unrolled and pipelined hardware implementation.
    \item We give partial weight spectra and \ac{BLER} simulation results for the designed codes %
    and ASIC implementation results of corresponding unrolled decoders.
\end{itemize}
The rest of the paper is structured as follows: 
Sec.~\ref{sec:preliminaries} briefly introduces the notation and the concept of \acp{PTPC} .
In Sec.~\ref{sec:design} we describe our design algorithms for row-merged polar codes.
Sec.~\ref{sec:fastSSCLdf} introduces the \ac{FSSCL} decoding algorithm for \acp{PTPC} and Sec.~\ref{sec:hardware} the corresponding hardware implementation.
In Sec.~\ref{sec:results} we show error-rate performance and implementation results of the proposed codes and decoders, before Sec.~\ref{sec:conc} concludes the paper.

\section{Preliminaries}\label{sec:preliminaries}

\subsection{Notations}
Vectors and matrices are indicated by lowercase and uppercase boldface letters, respectively. 
Sets are sorted and denoted by calligraphic letters.
$\mathbb{Z}_{i:j}$ denotes the set of integers between $i$ and $j-1$, with the shorthand notation $\mathbb{Z}_{j}=\mathbb{Z}_{0:j}$.
An integer $i \in \mathbb Z_{2^n}$ is implicitly represented by its $n$-bit \ac{MSB}-first binary expansion ${i = \sum_{l=0}^{n-1} i_{(l)}\cdot 2^l \mathrel{\widehat{=}} i_{(n-1)} \dots i_{(1)} i_{(0)}}$, with Hamming weight $\w(i)$.

\subsection{Error Probability of Binary Linear Block Codes}
Let $\w(\boldsymbol c)$ denote the Hamming weight of a binary vector $\boldsymbol c \in \mathbb{F}_2^N$.
The minimum distance a of binary linear block code $\mathcal C(N,K)$ with code length $N$ and code dimension $K$ is
\begin{equation*}
    \dmin(\mathcal{C}(N,K)) = \min\limits_{\boldsymbol{c},\boldsymbol{c}' \in \mathcal {C},\, \boldsymbol{c} \neq \boldsymbol {c}'} \w(\boldsymbol{c} \oplus \boldsymbol{c}') = \min\limits_{\boldsymbol {c} \in \mathcal{C},\, \boldsymbol{c} \neq \boldsymbol{0}} \w(\boldsymbol{c}).
\end{equation*}
The number of codewords $\boldsymbol{c} \in \mathcal{C}(N,K)$ with Hamming weight $w$ is given by
\begin{equation*}
    A_w\left(\mathcal{C}(N,K)\right) = \left| \left\{\boldsymbol{c} \in \mathcal{C}(N,K) \,\middle|\, \w(\boldsymbol{c}) = w \right\} \right|  %
\end{equation*} 
and the ordered set $\mathcal{A}_{\mathcal{C}} = \left\{A_0,\,A_1,\,\dots,\,A_N\right\}$ is called the weight spectrum of code $\mathcal{C}(N,K)$. 
For \ac{BPSK} mapping and transmission over an \ac{AWGN} channel with \ac{SNR} $E_\mathrm{b}/N_0$, the \ac{FER} under \ac{ML} decoding $P^\mathrm{ML}_\mathrm{FE}$ of a binary linear block code $\mathcal C(N,K)$ with code rate $R = K/N$ can be upper bounded by the union bound \cite{UnionBound} as 
\begin{equation} %
    P^\mathrm{ML}_\mathrm{FE} \leq P^\mathrm{UB}_\mathrm{FE} = \sum\limits_w A_w \cdot Q\left(\sqrt{2 w \cdot R \cdot \frac{E_\mathrm{b}}{N_0}}\right), \label{eq:unionbound}
\end{equation}
where $Q(\cdot)$ is the complementary cumulative distribution function of the standard normal distribution.
As the \ac{SNR} increases, the contribution of codewords with high weight $w$ in the computation of $P^\mathrm{UB}_\mathrm{FE}$ vanishes. 
Hence, Eq.~(\ref{eq:unionbound}) is well approximated by the first few non-zero terms (\emph{truncated union bound}), and thus, the minimum distance $\dmin$ and the number of minimum-weight codewords $A_{\dmin}$ are two decisive properties for the achievable \ac{ML} performance of a code.

\subsection{Polar Codes}
Polar codes are constructed based on the $n$-fold application of the polar transform, resulting in polarized synthetic bit channels. Reliable bit channels, denoted by the information set $\mathcal{I}$, are used for information transmission, while unreliable bit channels (frozen set $\mathcal{F}=\ZZ_N \setminus \mathcal{I}$) are set to 0. The choice of $\mathcal{I}$ is called the polar code design.
The generator matrix $\boldsymbol G = (\boldsymbol G_{N})_{\mathcal I}$ of an $(N,K)$ polar code  $\mathcal P(\mathcal I)$ is formed by those $K$ rows of the Hadamard matrix $\boldsymbol G_N = \begin{bsmallmatrix} 1 & 0 \\ 1 & 1 \end{bsmallmatrix}^{\otimes n}$ indicated by $\mathcal{I}$, where $(\cdot)^{\otimes n}$ denotes the $n$-th Kronecker product and $N = 2^n$.
Let $\boldsymbol m = \boldsymbol u_\mathcal{I} \in \mathbb F_2^{K}$ denote the message and  $\boldsymbol u_\mathcal{F} = \boldsymbol 0$ the frozen bits. Then, polar encoding can be written as
\begin{equation}\label{eq:polarEncoding}
    \boldsymbol c = \boldsymbol m \cdot \boldsymbol G = \boldsymbol u \cdot \boldsymbol G_N.
\end{equation}

Even though the optimal polar code design $\mathcal I$ is dependent on the transmission channel, some synthetic bit channels are always more reliable than others \cite{bardet_polar_automorphism}. 
Thus, the bit channels exhibit a \textit{partial order}, in which $i \preccurlyeq j$ denotes that channel $j$ is at least as reliable as channel $i$, defined by
\begin{enumerate}
  \item \textit{Left swap}: If $i_{(l)} > j_{(l)}$, and $i_{(l+1)} < j_{(l+1)}$ for ${l\in \mathbb Z_{n-1}}$, while all other $i_{(l')} = j_{(l')}$, $l' \notin \{l,l+1\}$, then $i \preccurlyeq j$. %
    \item \textit{Binary domination}: If $i_{(l)}\le j_{(l)}$ for all $l\in \ZZ_n$, then $i \preccurlyeq j$. 
\end{enumerate}
All remaining relations are derived from transitivity.
In summary, $i \preccurlyeq j$ if and only if $j$ equals $i$ with additional ``1'' bits and/or ``1'' bits moved to more significant positions.
A polar code $\mathcal P(\mathcal I)$ is called a \textit{decreasing monomial code} and said to comply with the partial order ``$\preccurlyeq$'' if and only if
$\mathcal I$ fulfills the partial order as $\forall i \in \mathcal I,\;\forall j \in \mathbb Z_N \text{ with } i \preccurlyeq j \implies j \in \mathcal I$.
In other words, for any information bit $i \in \mathcal I$, all more reliable channels $i \preccurlyeq j < N$ are also information bits and thus $j \in \mathcal I$. 
A decreasing monomial code can be concisely described by its minimal information set $\mathcal I_\mathrm{min}$ \cite{polar_aed}, such that the complete information set can be inferred from the partial order as
\begin{equation*}
    \mathcal I = \bigcup\limits_{i \in \mathcal I_\mathrm{min}} \left\{j \in \mathbb Z_N \,\middle|\, i \preccurlyeq j \right\}.
\end{equation*}
\ac{RM} codes are a special case of decreasing monomial codes.
The \ac{RM} code of order $r$ and length $N=2^n$, denoted by $\RM(r,n)$,
is defined by the minimal information set ${\mathcal I_\mathrm{min} = \{2^{n-r}-1\}}$ and has minimum distance ${\dmin = 2^{n-r}}$.

\subsection{Pre-Transformed Polar Codes (PTPCs)}
\begin{figure}[htp]
	\centering
	\resizebox{.8\columnwidth}{!}{\colorlet{dF}{orange!70!white}%
\begin{tikzpicture}[
    start chain = 1 going below,    node distance = 0pt,
    cell/.style={draw, fill=white, minimum width=1em, minimum height=1em, inner sep=0pt,
                outer sep=0pt, on chain},
    long/.style={draw, fill=white, minimum width=1em, minimum height=2em,  inner sep=0pt,
                outer sep=0pt, on chain},
    tick/.pic = {
    		\draw[line width=0.5pt] (-0.4mm,-0.8mm) -- (0.4mm,0.8mm);
    	}
  ]
  
\node[] (m) at (0.5, 0) {$\boldsymbol m$};
\node[dspsquare] (rp) at (2, 0) {$\mathcal I$};
\node[align=center, anchor=north, above = 0.1cm of rp]  {\footnotesize Rate-\\[-3pt]\footnotesize Profile};
\node[dspsquare] (pt) at (4, 0) {$\boldsymbol T$};
\node[align=center, anchor=north, above = 0.1cm of pt]  {\footnotesize Pre-\\[-3pt]\footnotesize Transform};
\node[dspsquare] (gn) at (6, 0) {$\boldsymbol G_N$};
\node[align=center, anchor=north, above = 0.1cm of gn]  {\footnotesize Polar\\[-3pt]\footnotesize Transform};

\node[] (c) at (7.5, 0) {$\boldsymbol c$};

\draw[dspconn](m)  -- node[] (tickM) {} node[below] {\footnotesize $K$} (rp);
\draw[dspconn](rp) -- node[above,yshift=2pt] {$\boldsymbol v$} node[] (tickV) {} node[below] {\footnotesize $N$}(pt);
\draw[dspconn](pt) -- node[above,yshift=2pt] {$\boldsymbol u$} node[] (tickU) {} node[below] {\footnotesize $N$}(gn);
\draw[dspconn](gn) -- node[] (tickC) {} node[below] {\footnotesize $N$} (c);

\pic at (tickM) {tick};
\pic at (tickV) {tick};
\pic at (tickU) {tick};
\pic at (tickC) {tick};

\end{tikzpicture}}
	\caption{Block diagram of pre-transformed polar encoding. %
    }
	\label{fig:Encoder}
\end{figure}
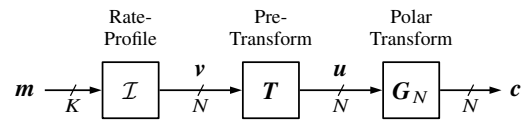

A polar code is said to be \emph{pre-transformed}, if the message is transformed by a (linear) mapping prior to the polar transformation $\boldsymbol G_N$. Pre-transformations can be used to improve the finite-length performance of a polar code $\mathcal{P}(\mathcal{I})$, which may be viewed in two ways: Firstly, in the short block length regime, some frozen channels may not be fully polarized yet and thus, their residual capacity still can be exploited by making them dynamic frozen bits. Secondly, the precoder may increase the minimum distance and/or reduce the number of minimum-weight codewords, thus, improving the \ac{ML} performance of the code. 
Fig.~\ref{fig:Encoder} shows the block diagram of the encoder of a \ac{PTPC} $\mathcal P(\mathcal I,\boldsymbol T) $.
The pre-transformation is sandwiched between the insertion of the frozen bits (rate-profiling) and the polar transform \cite{arikan2019pac}. Hence, it can be described by a multiplication with an upper-triangular matrix $\boldsymbol T \in \mathbb F_2^{N \times N}$, \ie ${t_{i,j}=0}$ if $i > j$ and ${t_{i,i}=1}$ for $i \in \mathcal I$.
The resulting pre-transformed message is given as
\begin{equation}
    \boldsymbol{u} = \boldsymbol{m} \cdot \boldsymbol{T}_\mathcal{I}  = \boldsymbol v \cdot \boldsymbol T,\label{eq:precoder}
\end{equation}
where the rate-profiled message $\boldsymbol{v}$ is found as $\boldsymbol{v}_{\mathcal{I}} = \boldsymbol{m} \in \mathbb{F}_2^K$, $\boldsymbol{v}_\mathcal{F} = \boldsymbol{0}$,
and again $\boldsymbol c = \boldsymbol u \cdot \boldsymbol G_N$.
The pre-transformation introduces dynamic frozen bits whose values are dependent on the previous message bits as
\begin{equation}\label{eq:dynamicFrozenBit}
    u_f = \bigoplus\limits_{h=0}^{f-1} v_h \odot t_{h,f},\;f \in \mathcal F.
\end{equation}
The frozen set can subsequently be split into the set of dynamic frozen bits denoted by ${\mathcal{D}=\left\{d \in \mathcal{F}\,\middle|\, \exists i \in \mathcal{I},\,t_{i,d}=1\right\}}$  
and static frozen bits  $\mathcal{F} \setminus \mathcal{D}$ which are still fixed to 0.
Note that this perspective on \acp{PTPC} differs from conventional code concatenation, where both inner and outer code have a rate $R < 1$. In contrast, \acp{PTPC} as considered in this paper add redundancy solely using the rate profile $\mathcal{I}$, while the following transformations $\boldsymbol{T}$ and $\boldsymbol{G}_N$ are rate-1. While for plain polar codes, simply $\boldsymbol T = \boldsymbol I_{N}$, many different variants of polar codes can be described under the umbrella of \acp{PTPC}\footnote{Any binary linear code can be described as an equivalent \ac{PTPC} \cite{subCodes}.}:

\subsubsection{CRC-aided Polar Codes}
In the formalism of Eq.~(\ref{eq:precoder}), the last $q$ synthetic channels corresponding to the check symbols of the $q$-bit \ac{CRC} of a \ac{CRC}-aided polar code are frozen and turned into dynamic frozen bits using a systematic pre-transformation matrix $\boldsymbol T$. 

\subsubsection{Polarization-Adjusted Convolutional (PAC) Codes}
The pre-transformation is given by a convolution with a generator polynomial 
$p(x) = \sum_{h=0}^q p_h x^h$
with degree $\operatorname{deg}(p(x))=q$ \cite{arikan2019pac}.

\subsubsection{Row-Merged Polar Codes}\label{subsubsec:rowmerged}
Information bits $v_i$ with $i \in \mathcal I$ are repeated in (dynamic) frozen positions $d \in \mathcal{D} \subseteq \mathcal F$, \ie $u_i = u_d = v_i$ \cite{GelincikRowMerged}. 
The set of tuples $(i,d)$ is denoted by $\mathcal{R}$.
Hence, the pre-transformation matrix is $\boldsymbol{T}$, 
with $t_{i,i}=1$ for all $i \in \mathcal{I}$, $t_{i,d} = 1$ for all ${(i,d) \in \mathcal{R}}$, and $t_{i,f}=0$ otherwise.
Note that there can be multiple pairs as $(i,d),(i,d') \in \mathcal{R}$ with $d \neq d'$.
In the following, we denote a row-merged polar code
with rate-profile $\mathcal{I}$ and set of row-merge pairs $\mathcal{R}$ by $\mathcal{P}(\mathcal{I},\mathcal{R})$.

\section{Design of Row-Merged Polar Codes}\label{sec:design}
\subsection{Theoretical Framework}
Our goal is to improve the weight spectrum of polar codes while introducing a minimal overhead caused by the pre-transformation.
To effectively precode polar codes, we first consider their underlying structure.
To simplify the analysis, a \ac{PTPC} $\C = \{\boldsymbol{0}\} \cup \bigcup_{i \in \mathcal I} \Ci$ is partitioned into  $K=|\mathcal I|$ distinct \emph{cosets}
\begin{equation}%
    \Ci = \left\{\boldsymbol{t}_i \cdot \boldsymbol{G}_N \oplus \bigoplus_{h \in \mathcal{H}} \boldsymbol{t}_h \cdot \boldsymbol{G}_N\,\middle|\,\mathcal{H} \subseteq \mathcal{I} \setminus \mathbb{Z}_{i+1} \right\},
\end{equation}
where $\boldsymbol{t}_k \cdot \boldsymbol{G}_N = \bigoplus_{h \in \operatorname{supp}(\boldsymbol{t}_k)} \boldsymbol{g}_h$ with $k \in \mathcal{I} \setminus \mathbb{Z}_i$.
The matrix $\boldsymbol{T}$ is upper-triangular, i.e., $\operatorname{supp}\left(\boldsymbol{t}_h\right) \cap \mathbb{Z}_h = \emptyset$ for all $h \in \mathbb{Z}_N$.
The row $\boldsymbol{g}_i$ is called the \emph{coset leader} of $\Ci$.
Instead of examining the whole code at once, each coset can be considered individually.
We subsequently denote \emph{plain} (i.e., without pre-transform) and \emph{universal} (i.e., without considering a specific rate-profile) polar cosets by ${\mathcal{P}_i(\mathcal I) = \mathcal{P}_i(\mathcal{I},\boldsymbol{T}=\boldsymbol{I}_N)}$ and $\mathcal{P}_i=\mathcal{P}_i(\mathcal I=\mathbb{Z}_N)$, respectively.

As shown in \cite{RowshanFormation}, any codeword in a universal coset $\mathcal{P}_i$ has at least the Hamming weight of the coset leader $\boldsymbol{g}_i$, i.e.,
\begin{equation}\label{eq:UTPdmin}
    \w\left(\boldsymbol{g}_i \oplus \bigoplus\limits_{h \in \mathcal{H}} \boldsymbol{g}_h\right) \geq \w(\boldsymbol{g}_i),
\end{equation}
for all $\mathcal H \subseteq \mathbb Z_{i+1:N}$.
Since $\Ci \subseteq \mathcal{P}_i$, from Eq.~(\ref{eq:UTPdmin}) follows that the codewords of $\Ci$ have at least the Hamming weight $\w(\boldsymbol{g}_i)$ of the coset leader $\boldsymbol{g}_i$.
In the case of plain polar codes, we find $\boldsymbol{g}_i \in \mathcal{P}_i(\mathcal{I}) \subseteq \mathcal{P}_i$. 
Thus, the minimum distance of a plain polar code $\mathcal{P}(\mathcal{I})$ is given as 
\begin{equation}\label{eq:dmin lower bound}
     \dmin(\mathcal{P}(\mathcal{I}))=\min_{i \in \mathcal{I}}\left\{\w(\boldsymbol{g}_i)\right\} = \wmin \leq \dmin(\C)
\end{equation}
and is a lower bound for the minimum distance of a \ac{PTPC} $\C$. %
In the following, the lowest potential codeword weight ${\wmin=\wmin(\mathcal{I})}$ is a constant dependent only on the (fixed) rate-profile $\mathcal{I}$ of the given \ac{PTPC} $\C$.
Furthermore, Eq.~(\ref{eq:UTPdmin}) implies that codewords $\boldsymbol{c} \in \C$ with weight $\wmin \leq \w(\boldsymbol{c}) < 2\wmin$ can only occur in cosets with indices 
\begin{equation*}
    i \in \mathcal{I}_{\wmin} = \left\{j \in \mathcal{I} \,\middle|\, \w(\boldsymbol{g}_j) = \wmin\right\},
\end{equation*}
since $\w(\boldsymbol{g}_h) = 2^{\w(h)}$ for $h \in \mathbb{Z}_N$ \cite{RowshanFormation}. 
Finally, from Eq.~(\ref{eq:UTPdmin}) follows that in the formation of a codeword $\boldsymbol{c} = \boldsymbol{u} \cdot \boldsymbol{G}_N$ with weight $\w(\boldsymbol{c}) < 2\wmin$, at least one row $\boldsymbol{g}_i$ with $i \in \mathcal{I}_{\wmin}$ is involved, as $\operatorname{supp}(\boldsymbol{u}) \cap \mathcal{I}_{\wmin} \neq \emptyset$.

Based on these insights, we now consider how to choose the pre-transform $\boldsymbol{T}$ to reduce the number of $\wmin$-weight codewords $\Awmin(\C)$ of a \ac{PTPC}, starting from the corresponding plain polar code $\mathcal{P}(\mathcal{I})$.
Our strategy is to increase the weight of the rows $\boldsymbol{g}_i$ with $i \in \mathcal{I}_{\wmin}$ by precoding them with subsequent frozen rows $\boldsymbol{g}_f$ where $f \in \mathcal{F} \setminus \mathbb{Z}_i$.
Thus, we partition the frozen rows into those that can increase the weight of the coset leader $\boldsymbol{g}_i$ as
\begin{equation}
        \mathcal{F}_i^* = \left\{f \in \mathcal{F} \setminus \mathbb{Z}_i\,\middle|\,\w(\boldsymbol{g}_i \oplus \boldsymbol{g}_f) > \w(\boldsymbol{g}_i)\right\}
\end{equation}
and those that cannot, $\mathcal{F}_i^\circ = \mathcal{F} \setminus \mathcal{F}_i^*$.
Following this, as introduced in \cite{ZunkerTreeIntersection} and \cite{Rowshan2023Impact}, we distinguish two types of cosets:
\begin{equation*}
    \Ci \mathrel{\widehat{=}} 
    \begin{cases}
        \emph{``pre-transformable"} & \text{ 
        if } \mathcal{F}_i^* \neq \emptyset\\
        \emph{``non pre-transformable"} & \text{ 
        if } \mathcal{F}_i^* = \emptyset.\\
    \end{cases}
\end{equation*}
In other words, pre-transformable cosets $\Ci$ contain one or more frozen rows $f>i$ with the property that the addition to the coset leader increases the Hamming weight.
Correspondingly, the set of indices of pre-transformable cosets is $\mathcal{I}^*_{\wmin} = \left\{i \in \mathcal I_{\wmin} \,\middle|\, \mathcal{F}_i^* \neq \emptyset\right\}$ and of non pre-transformable cosets is $\mathcal{I}^\circ_{\wmin} = \mathcal{I}_{\wmin} \setminus \mathcal{I}^*_{\wmin}$.
It was shown in \cite{ZunkerTreeIntersection} and \cite{Rowshan2023Impact} that the number of $\wmin$-weight codewords in the non pre-transformable cosets with indices $\mathcal{I}^\circ_{\wmin}$ cannot be reduced by pre-transformation.
This is intuitive since there are no frozen rows that can increase the weight of the respective coset leader.
Consequently, for the design of the pre-transformation, we only have to consider the pre-transformable cosets $\mathcal{I}^*_{\wmin}$.
Following this classification, we find the following theorem.
\begin{theorem}\label{thm:dmin=wmin}
    A pre-transformed polar code $\C$ with decreasing rate-profile $\mathcal{I}$ has the same minimum distance as the corresponding plain polar code $\mathcal P(\mathcal{I})$. %
    \label{thm:dmin}
\end{theorem}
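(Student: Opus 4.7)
The lower bound $\dmin(\C) \geq \wmin = \dmin(\mathcal{P}(\mathcal{I}))$ is already furnished by Eq.~(\ref{eq:dmin lower bound}), so the plan reduces to exhibiting a single codeword of weight exactly $\wmin$ in $\C$. I would first normalize $\boldsymbol{T}$: since $\C$ depends only on the row space of the submatrix $\boldsymbol{T}_\mathcal{I}$, Gaussian elimination among those rows produces an equivalent pre-transform in which $t_{i,j} = 0$ for all distinct $i,j \in \mathcal{I}$. Upper-triangularity is preserved, because the leading 1's of the reduced rows still sit at columns $\mathcal{I}$, and each row $\boldsymbol{t}_i$ ends up with support inside $\{i\} \cup \{f \in \mathcal{F} : f > i\}$.

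Next I would single out $i^* := \max \mathcal{I}_{\wmin}$ and consider the message $\boldsymbol{v} = \boldsymbol{e}_{i^*}$; it produces the codeword $\boldsymbol{t}_{i^*} \cdot \boldsymbol{G}_N = \boldsymbol{g}_{i^*} \oplus \bigoplus_{f \in \mathcal{F},\, f > i^*,\, t_{i^*,f}=1} \boldsymbol{g}_f$. The task then becomes to show that no frozen index exceeds $i^*$, so that this sum collapses to $\boldsymbol{g}_{i^*}$ of weight $\wmin$. I would argue this in two structural steps. (i) If $i^*$ admitted a left swap, the resulting $j > i^*$ would satisfy $\w(j) = \w(i^*)$ and, by the decreasing hypothesis, $j \in \mathcal{I}_{\wmin}$, contradicting the maximality of $i^*$. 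Hence $i^*$ carries no adjacent bit pattern $(1,0)$, which forces it to be top-packed, i.e., $i^* = 2^n - 2^{n - \w(i^*)}$. (ii) For any $j > i^*$, the most-significant bit at which $j$ and $i^*$ differ necessarily lies below position $n-\w(i^*)$ (otherwise $i^*$ would have a 1 where $j$ has a 0, forcing $j < i^*$); hence $j$ retains every 1 of $i^*$, so $i^* \preccurlyeq j$ by binary domination, and the decreasing hypothesis gives $j \in \mathcal{I}$.

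The central obstacle is step (i): distilling the top-packed normal form of $\max \mathcal{I}_{\wmin}$ purely from the decreasing structure of $\mathcal{I}$. Once that observation is in place, step (ii) and the \ac{RREF} normalization of $\boldsymbol{T}$ are essentially mechanical, and the weight-$\wmin$ codeword $\boldsymbol{g}_{i^*} \in \C$ drops out to match the lower bound and complete the proof.
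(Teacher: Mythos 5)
Your proposal is correct and follows essentially the same route as the paper's proof: single out $i^* = \max\left(\mathcal{I}_{\wmin}\right)$, use the left-swap rule to conclude it is top-packed (i.e., $i^* = N - N/\wmin$), use binary domination together with the decreasing property to show that every index above $i^*$ is an information bit, and hence conclude that $\boldsymbol{g}_{i^*}$ itself survives as a weight-$\wmin$ codeword of $\C$, matching the lower bound of Eq.~(\ref{eq:dmin lower bound}). The only presentational difference is your explicit row-reduction of $\boldsymbol{T}$ to eliminate entries in information columns, which the paper leaves implicit in its coset formalism (contributions of information rows introduced by the pre-transform can always be cancelled), so your step merely makes that point rigorous rather than changing the argument.
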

\begin{proof}
    From the \textit{left swap} rule follows that every decreasing monomial code $\mathcal{P}(\mathcal {I})$ includes the most reliable row $\boldsymbol g_{i^\circ}$ of weight $\wmin=\w(\boldsymbol{g}_{i^\circ})$ with index 
    \begin{equation*}
        {i^\circ = \max\left(\mathcal I_{\wmin}\right) = N - \frac{N}{\wmin}}
    \end{equation*}    
    since $i^\circ \succcurlyeq i$ for all ${i \in \mathcal I_{\wmin}}$. By applying the \textit{binary domination} rule, find that $\mathbb{Z}_{i^\circ:N}=\left\{j \in \mathbb Z_N \,\middle|\, i^\circ \preccurlyeq j \right\} \subseteq \mathcal{I}$,
    and thus $\boldsymbol{g}_i \in \C$ for $i>i^\circ$.
    Subsequently, as there is no frozen row $\boldsymbol{g}_f$ with $f \in \mathcal{F}_i^* = \emptyset$ to add to $\boldsymbol g_{i^\circ}$, it follows that $\boldsymbol g_{i^\circ} \in \mathcal P(\mathcal{I},\boldsymbol{T})$ and $i^\circ \in \mathcal{I}^\circ_{\wmin} \neq \emptyset$.
    With the lower bound of Eq.~(\ref{eq:dmin lower bound}) we find that $\dmin(\C) = \dmin(\mathcal P(\mathcal I)) = \wmin$.
\end{proof}

\subsection{Selection of Row-Merges}
If we restrict ourselves to using each frozen bit at most once to precode low-weight generators, the pre-transformation is simply a repetition and we obtain a row-merged polar code $\mathcal{P}(\mathcal{I},\mathcal{R})$.
We propose a greedy algorithm to successively select the row-merge pairs $\mathcal{R}$.
In each iteration, we find the pair of $(i, f)$ for $i \in \mathcal{I}^*_{\wmin}$ and $f \in \mathcal{F}_i^*$, such that the $\dmin$ is increased the most and/or $A_{\dmin}$ is decreased the most.
That selecting rows $\mathcal{F}_i^*$ for precoding row $i$ directly reduces $\Admin$ can be explained using the tree-formalism introduced in \cite{ZunkerTreeIntersection}, where in this case, leaf nodes are formed in the message tree corresponding to the $\wmin$-weight codewords.
Only considering the rows $\mathcal{F}_i^*$ at the start of the design process showed faster convergence of the algorithm and smaller values of $\Admin$ are achieved.
If $\Admin$ cannot be reduced further by only considering the rows $\mathcal{F}_i^*$, also the rows $\mathcal{F}_i^\circ$ are taken into account, as they might able to decrease $\Admin$ further, even though they do not directly increase the weight of a $\wmin$-weight generator.

\begin{algorithm}[tbh]
    \small%
	\SetAlgoLined\LinesNumbered
	\SetKwInOut{Input}{Input}\SetKwInOut{Output}{Output}
	\Input{Rate-profile $\mathcal I$}
	\Output{Row-merges $\mathcal{R}$}
    $\mathcal D \gets \emptyset,\;\mathcal R \gets \emptyset$\;
    $d_{\smash{\mathrm{min}}}^* \gets \dmin(\mathcal{P}(\mathcal{I},\mathcal{R})),\;A_{\smash{d_\mathrm{min}}}^* \gets \Admin(\mathcal{P}(\mathcal{I},\mathcal{R}))$\;
    $\operatorname{increase\_weight} \gets \mathrm{\textbf{True}}$\;
    \While{\rm{\textbf{True}}}{
        $d_{\smash{\mathrm{min}}}^{**} \gets d_{\smash{\mathrm{min}}}^*$, $A^{**}_{\smash{d_\mathrm{min}}} \gets A_{\smash{d_\mathrm{min}}}^*$\;\vspace{0.025cm}
        \ForEach{$i \in \mathcal{I}_{\wmin}^*$\vspace{0.025cm}}{ 
            \uIf{$\operatorname{increase\_weight}$}{
                $\mathcal{D}' \gets \mathcal{F}_i^* \setminus \mathcal{D}$\;
            }
            \Else{
                $\mathcal{D}' \gets (\mathcal{F}_i^* \cup \mathcal{F}_i^\circ) \setminus \mathcal{D}$\;
            }
            \ForEach{$f \in \mathcal{D}'$ }{ \label{alg:row_merge:forf}
                $\mathcal{R}' \gets \mathcal{R} \cup \{(i,f)\})$\;
                $\dmin \gets \dmin(\mathcal{P}(\mathcal{I},\mathcal{R}'),\;
                \Admin \gets \Admin(\mathcal{P}(\mathcal{I},\mathcal{R}')$\;   
                \If{$(\dmin = d_{\smash{\mathrm{min}}}^*$ \rm{\textbf{and}} $\Admin < A_{\smash{d_\mathrm{min}}}^*)$ \rm{\textbf{or}} \break
                $\dmin > d_{\smash{\mathrm{min}}}^*$}{
                    $d_{\smash{\mathrm{min}}}^{*} \gets \dmin,\,A^{*}_{\smash{d_\mathrm{min}}} \gets \Admin,\,i^* \gets i,\,f^* \gets f$\; 
                }
            }
        }
        \uIf{$A_{\smash{d_\mathrm{min}}}^* < A^{**}_{\smash{d_\mathrm{min}}}$\vspace{0.025cm}}{
            $\mathcal{D} \gets \mathcal{D} \cup \{f^*\},\;\mathcal{R} \gets \mathcal{R} \cup \{(i^*,f^*)\}$\;
        }
        \uElseIf{$\operatorname{increase\_weight}$}{
                $\operatorname{increase\_weight} \gets \mathrm{\textbf{False}}$\;
            }
        \lElse{\Return $\mathcal{R}$}
    }
	\caption{\footnotesize $\operatorname{merge\_rows}(\mathcal{I})$}
    \label{alg:row_merge}
\end{algorithm}

Alg.~\ref{alg:row_merge} lists the pseudo-code of the proposed procedure.
Which of the two phases of the search is currently active is indicated by the variable ``$\operatorname{increase\_weight}$''.
The functions ``$\dmin(\mathcal{P}(\mathcal{I},\mathcal{R}))$'' and ``$\Admin(\mathcal{P}(\mathcal{I},\mathcal{R}))$'' retrieve the minimum distance $\dmin$ and the number of minimum-weight codewords $A_{\dmin}$ for the row-merged polar code with rate-profile $\mathcal{I}$ and row-merges $\mathcal{R}$, respectively.
For these functions, we use the implementation given in \cite[Alg.~1]{ZunkerTreeIntersection}. 
If all minimum-weight codewords of the plain polar code have been eliminated, we revert to the more complex method proposed in \cite{PartialEnumPAC}, generalized as in \cite{ZunkerTreeIntersection} to the same recursive exploration of polar cosets to find which sub-trees have to be explored to count the codewords of a certain weight.
Note that the general algorithm presented in \cite{Miloslavskaya2022partialweightdistribution} can also be used, but has a much higher computational complexity in the case of a polar-like rate-profile.

\begin{remark}\label{rk:simplification}\normalfont
To reduce the computational complexity of Alg.~\ref{alg:row_merge}, one can limit the exploration space for each coset in Line~\ref{alg:row_merge:forf} to only a subset of $\mathcal{D}'$.
Namely, we found that restriction to the subsequent two frozen indices (i.e., the two smallest indices in $\mathcal{D}'$) results in similarly performing and often the same row-merged polar codes. 
This is intuitive, considering the tree-formalism introduced in \cite{ZunkerTreeIntersection}. 
The message sub-trees at levels with smaller indices are larger. 
Thus, a pre-transformation using a frozen bit with a smaller index can achieve a larger reduction of $\Awmin$ with fewer sub-trees to eliminate.
This is backed up by concrete results given in Sec.~\ref{sec:results}. 
\end{remark}

\subsection{Design of the Rate Profile}
\label{ssec:rateproofiledesign}
For \ac{PAC} codes under sequential (Fano) decoding, as proposed in \cite{arikan2019pac}, \ac{ML} performance is the main design criterion and hence, the \ac{RM} rate-profile is a reasonable choice.
However, for fixed-latency \ac{SCL} decoding, \ac{SC} decodability, i.e., the bit error probabilities of the information bit channels using the \ac{SC} schedule, plays an important role.
\Ac{DE} is a well-known method to construct good polar codes for \ac{SC}-based decoding \cite{constructDE} and, thus, is the basis of our construction.
\Ac{DE} tracks the probability densities of the channel output at a given \ac{SNR} (the \emph{design SNR} ${E_\mathrm{s}}/{N_0}$) through the decoding graph and retrieves the error-probabilities of all synthetic channels. The $K$ most reliable channels are then selected as the information set $\mathcal{I}$. While this optimizes the performance of \ac{SC} decoding at the design \ac{SNR}, other code designs can be also generated by selecting different design \acp{SNR}. 
For example, it is well-known that a high design \ac{SNR} results in codes with large $\dmin$, ultimately producing \ac{RM} codes in the limit.

\subsubsection{Decreasing Rate-Profiles}
From Thm.~\ref{thm:dmin} we know that for decreasing rate profiles $\mathcal{I}$, pre-transformations (including row-merges) cannot increase the minimum distance. 
Hence, in the short block length regime, where performance is mostly limited by the weight spectrum, the plain polar code must already have a sufficiently large $\dmin$.
Therefore, the design \ac{SNR} of \ac{DE} should be selected as low as possible (for good decodability), while still reaching the desired $\dmin$. In particular, we find good codes with code dimensions slightly below \ac{RM} codes. The \ac{DE} design can then reach the $\dmin$ of the next larger \ac{RM} code, but exclude the (typically early) synthetic bit channels that have a high bit error probability.

\begin{example}\label{ex:P60128}
\normalfont Consider the case ${N=128}$, ${K=60}$. The next larger \ac{RM} code with the same block length is $\mathcal{RM}(3,7)$ with ${K=64}$ and ${\dmin=16}$. 
Hence, it is feasible to construct a polar code with ${K=60}$ and ${\dmin=16}$.
We find that a design \ac{SNR} of at least \SI{2.9}{\dB} is required to achieve ${\dmin=16}$. The resulting code has the minimal information set ${\mathcal I_\mathrm{min}=\{29,43,71\}}$.
\end{example}

\begin{example}\label{ex:P75256}
\normalfont Consider a medium block length of $N=256$ and code dimension $K=75$.
The next larger \ac{RM} code is $\mathcal{RM}(3,8)$ with code dimension $K=93$ and $d_\mathrm{min}=32$.
Thus, there is plenty of leeway to design a rate-profile better suited for \ac{SC}-based decoding. Namely, \ac{DE} at ${E_\mathrm{s}}/{N_0} = \qty{0.1}{dB}$ results in a code with $d_\mathrm{min}=32$ and $\mathcal I_\mathrm{min}=\{63,115,157,167\}$. In comparison, the 5G polar code (without \ac{CRC}) has only $\dmin=16$.
\end{example}

\subsubsection{Non-Decreasing Rate-Profiles}
While the above method is effective in designing good decreasing rate-profiles, it is not feasible in the following cases:
\begin{itemize}
    \item For long block lengths or code dimensions too close to an \ac{RM} code, where \ac{SC} decodability is the main limiting factor (and design \ac{SNR} must be lowered, decreasing $\dmin$)
    \item When the desired $\dmin$ cannot be achieved by \ac{DE}
\end{itemize}

Hence, the pre-transform must be used to increase the minimum distance of the plain polar code.
We know from Thm.~\ref{thm:dmin} that the rate-profile cannot be decreasing in this case.
In particular, the result from Thm.~\ref{thm:dmin} is due to the last minimum-weight generators in a decreasing rate profile, which cannot be pre-transformed (as there are no more frozen bits below). 
To circumvent this, we propose to freeze the last $\kappa$ non pre-transformable, minimum-weight generators, which can then be used in row-merges to precode rows above.
The resulting rate loss has to be counteracted by unfreezing other rows, i.e., designing a code using \ac{DE} with $K'=K+\kappa$. In practice, this method can be implemented iteratively, as listed in Alg.~\ref{alg:rate_profile_design}. Here, $\boldsymbol{q}$ denotes the synthetic channel ranking from the best ($q_0$) to the worst ($q_{N-1}$) bit channel.

\begin{algorithm}[tbh]
    \small%
	\SetAlgoLined\LinesNumbered
	\SetKwInOut{Input}{Input}\SetKwInOut{Output}{Output}
	\Input{Channel ranking $\boldsymbol{q}$, code length $N$,\newline code dimension $K$, target minimum distance $d_{\smash{\mathrm{min}}}^*$}
	\Output{Rate-profile $\mathcal{I}$, set of row-merges $\mathcal{R}$}
    $\mathcal{I} \gets  \{q_i\,|\, i \in \mathbb{Z}_K\},\;\mathcal{R} \gets \emptyset$\;%
    $\kappa \gets 0$\;
    \While{$\dmin(\mathcal{P}(\mathcal{I},\mathcal{R})) < d_{\smash{\mathrm{min}}}^*$}{
        \lIf{$K+\kappa=N$}{\Return \emph{design failure}}
        $\mathcal{I} \gets (\mathcal{I} \setminus \{\max(\mathcal{I}_{\wmin})\}) \cup \{q_{K+\kappa}\}$\;
        $\mathcal{R} \gets \operatorname{merge\_rows}(\mathcal{I})$\;
        $\kappa \gets \kappa+1$\;
    }
    \Return $\mathcal I,\;\mathcal{R}$\;
	\caption{\footnotesize Non-decreasing row-merged polar code design}
    \label{alg:rate_profile_design}
\end{algorithm}

\begin{example}\label{ex:P5121024}
\normalfont Let $N=1024$ and $K=512$. The next larger \ac{RM} code is $\mathcal{RM}(5,10)$ with code dimension ${K=638}$ and ${\dmin=32}$.
A minimum design SNR of \qty{2.2}{dB} is required to obtain a ${\dmin=32}$ polar code by \ac{DE}.
This design limits the decodability in the low to medium \ac{SNR} range.
Therefore, if the operating point is not supposed to be at a very low \ac{BLER}, the design \ac{SNR} should be reduced.
\ac{DE} at design \ac{SNR} of $\qty{0}{dB}$ results in ${\left|\mathcal{I}^\circ_{\wmin}\right|=12}$ non-pre-transformable rows of weight $\wmin=16$, namely the last 12 weight-16 rows. We find that freezing the last $\kappa=7$ of these rows allows the row-merge precoder to eliminate all weight-16 codewords as the remaining weight-16 rows become pre-transformable. %
To counteract the rate-loss, the code dimension of the base code design is increased to $K'=512+7$. 
\end{example}

Note that such a design is similar to conventional \ac{CRC}-aided polar codes, where high-index information bits are turned into dynamic frozen bits. However, the explicit selection of these vulnerable positions results generally in better code designs than a \ac{CRC} in the very last information bits, as it is easier to increase $\dmin$ and bit channels with higher error probability are eliminated.
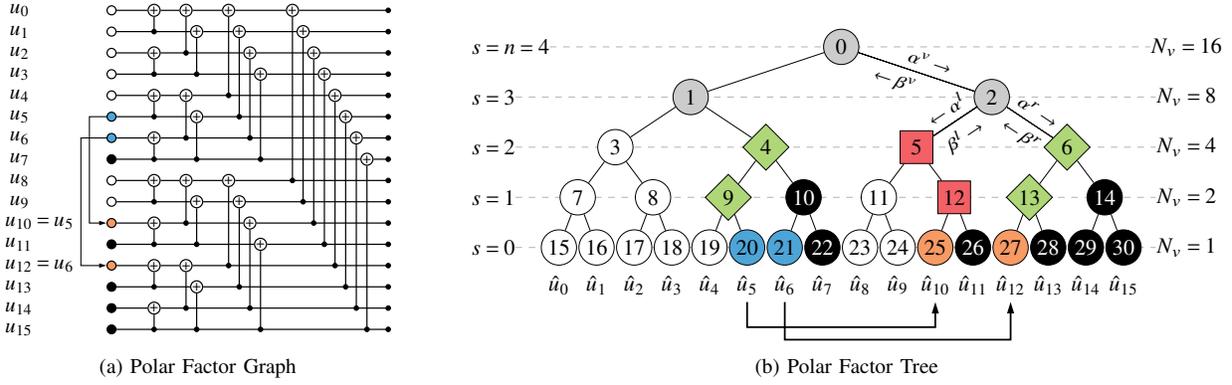
\begin{figure*}
	\centering
    \subfloat[\footnotesize Polar Factor Graph]{%
        \resizebox{0.29\columnwidth}{!}%
        {\colorlet{R0}{white}%
\colorlet{R1}{black}%
\colorlet{mI}{mittelblau!70!white}%
\colorlet{dF}{orange!70!white}%
\newcommand\circleSize{6pt}%
\begin{tikzpicture}[thick]%
    \tikzstyle{frozennode} = [circle, minimum size=\circleSize, inner sep=0pt,
        fill=R0, draw=black]%
    \tikzstyle{normalnode} = [circle, minimum size=\circleSize, inner sep=0pt,
        fill=R1, draw=black, line width=0]%
    \tikzstyle{dynfrozen}  = [circle, minimum size=\circleSize, inner sep=0pt,
        fill=dF, draw=black] %
    \tikzstyle{mergedinfo} = [circle, minimum size=\circleSize, inner sep=0pt,
        fill=mI, draw=black] %
    \tikzstyle{xor} = [circle,draw,inner sep=-1pt, minimum size=8pt] %
    \tikzstyle{dot} = [dspnodefull,minimum size=3pt,inner sep=0pt]%
    \node[frozennode] at (0, 0.0)  (u0) {};
    \node[frozennode] at (0,-0.5)  (u1) {};
    \node[frozennode] at (0,-1.0)  (u2) {};
    \node[frozennode] at (0,-1.5)  (u3) {};
    \node[frozennode] at (0,-2.0)  (u4) {};
    \node[mergedinfo] at (0,-2.5)  (u5) {};
    \node[mergedinfo] at (0,-3.0)  (u6) {};
    \node[normalnode] at (0,-3.5)  (u7) {};
    \node[frozennode] at (0,-4.0)  (u8) {};
    \node[frozennode] at (0,-4.5)  (u9) {};
    \node[dynfrozen]  at (0,-5.0) (u10) {};
    \node[normalnode] at (0,-5.5) (u11) {};
    \node[dynfrozen]  at (0,-6.0) (u12) {};
    \node[normalnode] at (0,-6.5) (u13) {};
    \node[normalnode] at (0,-7.0) (u14) {};
    \node[normalnode] at (0,-7.5) (u15) {};
    \node[xor]  (n00) at (1.0, 0.0) {+} edge [-]  (u0);
    \node[dot]  (n10) at (1.0,-0.5) {}  edge [-]  (u1);
    \node[xor]  (n20) at (1.0,-1.0) {+} edge [-]  (u2);
    \node[dot]  (n30) at (1.0,-1.5) {}  edge [-]  (u3);
    \node[xor]  (n40) at (1.0,-2.0) {+} edge [-]  (u4);
    \node[dot]  (n50) at (1.0,-2.5) {}  edge [-]  (u5);
    \node[xor]  (n60) at (1.0,-3.0) {+} edge [-]  (u6);
    \node[dot]  (n70) at (1.0,-3.5) {}  edge [-]  (u7);
    \node[xor]  (n80) at (1.0,-4.0) {+} edge [-]  (u8);
    \node[dot]  (n90) at (1.0,-4.5) {}  edge [-]  (u9);
    \node[xor] (n100) at (1.0,-5.0) {+} edge [-] (u10);
    \node[dot] (n110) at (1.0,-5.5) {}  edge [-] (u11);
    \node[xor] (n120) at (1.0,-6.0) {+} edge [-] (u12);
    \node[dot] (n130) at (1.0,-6.5) {}  edge [-] (u13);
    \node[xor] (n140) at (1.0,-7.0) {+} edge [-] (u14);
    \node[dot] (n150) at (1.0,-7.5) {}  edge [-] (u15);
    
    \draw[-]  (n00)-- (n10);
    \draw[-]  (n20)-- (n30);
    \draw[-]  (n40)-- (n50);
    \draw[-]  (n60)-- (n70);
    \draw[-]  (n80)-- (n90);
    \draw[-] (n100)--(n110);
    \draw[-] (n120)--(n130);
    \draw[-] (n140)--(n150);
    \node[xor]  (n01) at (1.75, 0.0) {+} edge [-]  (n00);
    \node[xor]  (n11) at (2.00,-0.5) {+} edge [-]  (n10);
    \node[dot]  (n21) at (1.75,-1.0) {}  edge [-]  (n20);
    \node[dot]  (n31) at (2.00,-1.5) {}  edge [-]  (n30);%
    \node[xor]  (n41) at (1.75,-2.0) {+} edge [-]  (n40);
    \node[xor]  (n51) at (2.00,-2.5) {+} edge [-]  (n50);
    \node[dot]  (n61) at (1.75,-3.0) {}  edge [-]  (n60);
    \node[dot]  (n71) at (2.00,-3.5) {}  edge [-]  (n70);
    \node[xor]  (n81) at (1.75,-4.0) {+} edge [-]  (n80);
    \node[xor]  (n91) at (2.00,-4.5) {+} edge [-]  (n90);
    \node[dot] (n101) at (1.75,-5.0) {}  edge [-] (n100);
    \node[dot] (n111) at (2.00,-5.5) {}  edge [-] (n110);%
    \node[xor] (n121) at (1.75,-6.0) {+} edge [-] (n120);
    \node[xor] (n131) at (2.00,-6.5) {+} edge [-] (n130);
    \node[dot] (n141) at (1.75,-7.0) {}  edge [-] (n140);
    \node[dot] (n151) at (2.00,-7.5) {}  edge [-] (n150);
    
    \draw[-]  (n01)-- (n21);
    \draw[-]  (n11)-- (n31);
    \draw[-]  (n41)-- (n61);
    \draw[-]  (n51)-- (n71);
    \draw[-]  (n81)--(n101);
    \draw[-]  (n91)--(n111);
    \draw[-] (n121)--(n141);
    \draw[-] (n131)--(n151);
    \node[xor]  (n02) at (2.75, 0.0) {+} edge [-]  (n01);
    \node[xor]  (n12) at (3.00,-0.5) {+} edge [-]  (n11);
    \node[xor]  (n22) at (3.25,-1.0) {+} edge [-]  (n21);
    \node[xor]  (n32) at (3.50,-1.5) {+} edge [-]  (n31);
    \node[dot]  (n42) at (2.75,-2.0) {}  edge [-]  (n41);
    \node[dot]  (n52) at (3.00,-2.5) {}  edge [-]  (n51);
    \node[dot]  (n62) at (3.25,-3.0) {}  edge [-]  (n61);
    \node[dot]  (n72) at (3.50,-3.5) {}  edge [-]  (n71);%
    \node[xor]  (n82) at (2.75,-4.0) {+} edge [-]  (n81);
    \node[xor]  (n92) at (3.00,-4.5) {+} edge [-]  (n91);
    \node[xor] (n102) at (3.25,-5.0) {+} edge [-] (n101);
    \node[xor] (n112) at (3.50,-5.5) {+} edge [-] (n111);
    \node[dot] (n122) at (2.75,-6.0) {}  edge [-] (n121);
    \node[dot] (n132) at (3.00,-6.5) {}  edge [-] (n131);
    \node[dot] (n142) at (3.25,-7.0) {}  edge [-] (n141);
    \node[dot] (n152) at (3.50,-7.5) {}  edge [-] (n151);
    
    \draw[-]  (n02)-- (n42);
    \draw[-]  (n12)-- (n52);
    \draw[-]  (n22)-- (n62);
    \draw[-]  (n32)-- (n72);%
    \draw[-]  (n82)--(n122);
    \draw[-]  (n92)--(n132);
    \draw[-] (n102)--(n142);
    \draw[-] (n112)--(n152);
    \node[xor]  (n03) at (4.25, 0.0) {+} edge [-]  (n02);
    \node[xor]  (n13) at (4.50,-0.5) {+} edge [-]  (n12);
    \node[xor]  (n23) at (4.75,-1.0) {+} edge [-]  (n22);
    \node[xor]  (n33) at (5.00,-1.5) {+} edge [-]  (n32);
    \node[xor]  (n43) at (5.25,-2.0) {+} edge [-]  (n42);
    \node[xor]  (n53) at (5.50,-2.5) {+} edge [-]  (n52);
    \node[xor]  (n63) at (5.75,-3.0) {+} edge [-]  (n62);
    \node[xor]  (n73) at (6.00,-3.5) {+} edge [-]  (n72);
    \node[dot]  (n83) at (4.25,-4.0) { } edge [-]  (n82);
    \node[dot]  (n93) at (4.50,-4.5) { } edge [-]  (n92);
    \node[dot] (n103) at (4.75,-5.0) { } edge [-] (n102);
    \node[dot] (n113) at (5.00,-5.5) { } edge [-] (n112);
    \node[dot] (n123) at (5.25,-6.0) { } edge [-] (n122);
    \node[dot] (n133) at (5.50,-6.5) { } edge [-] (n132);
    \node[dot] (n143) at (5.75,-7.0) { } edge [-] (n142);
    \node[dot] (n153) at (6.00,-7.5) { } edge [-] (n152);

    \draw[-]  (n03)-- (n83);
    \draw[-]  (n13)-- (n93);
    \draw[-]  (n23)--(n103);
    \draw[-]  (n33)--(n113);
    \draw[-]  (n43)--(n123);
    \draw[-]  (n53)--(n133);
    \draw[-]  (n63)--(n143);
    \draw[-]  (n73)--(n153);
    \node[dot]  (end0) at (6.50, 0.0) {} edge [-]  (n03);
    \node[dot]  (end1) at (6.50,-0.5) {} edge [-]  (n13);
    \node[dot]  (end2) at (6.50,-1.0) {} edge [-]  (n23);
    \node[dot]  (end3) at (6.50,-1.5) {} edge [-]  (n33);
    \node[dot]  (end4) at (6.50,-2.0) {} edge [-]  (n43);
    \node[dot]  (end5) at (6.50,-2.5) {} edge [-]  (n53);
    \node[dot]  (end6) at (6.50,-3.0) {} edge [-]  (n63);
    \node[dot]  (end7) at (6.50,-3.5) {} edge [-]  (n73);
    \node[dot]  (end8) at (6.50,-4.0) {} edge [-]  (n83);
    \node[dot]  (end9) at (6.50,-4.5) {} edge [-]  (n93);
    \node[dot] (end10) at (6.50,-5.0) {} edge [-] (n103);
    \node[dot] (end11) at (6.50,-5.5) {} edge [-] (n113);
    \node[dot] (end12) at (6.50,-6.0) {} edge [-] (n123);
    \node[dot] (end13) at (6.50,-6.5) {} edge [-] (n133);
    \node[dot] (end14) at (6.50,-7.0) {} edge [-] (n143);
    \node[dot] (end15) at (6.50,-7.5) {} edge [-] (n153);

\newcommand\xPos{-2.5}
    \node (in0)  at (\xPos, 0.0)[label={[align=left,shift=(in0.west),anchor=west, font=\Large]$\;{u}_{0}$}] {};
    \node (in1)  at (\xPos,-0.5)[label={[align=left,shift=(in1.west),anchor=west, font=\Large]$\;{u}_{1}$}] {};
    \node (in2)  at (\xPos,-1.0)[label={[align=left,shift=(in2.west),anchor=west, font=\Large]$\;{u}_{2}$}] {};
    \node (in3)  at (\xPos,-1.5)[label={[align=left,shift=(in3.west),anchor=west, font=\Large]$\;{u}_{3}$}] {};
    \node (in4)  at (\xPos,-2.0)[label={[align=left,shift=(in4.west),anchor=west, font=\Large]$\;{u}_{4}$}] {};
    \node (in5)  at (\xPos,-2.5)[label={[align=left,shift=(in5.west),anchor=west, font=\Large]$\;{u}_{5}$}] {};
    \node (in6)  at (\xPos,-3.0)[label={[align=left,shift=(in6.west),anchor=west, font=\Large]$\;{u}_{6}$}] {};
    \node (in7)  at (\xPos,-3.5)[label={[align=left,shift=(in7.west),anchor=west, font=\Large]$\;{u}_{7}$}] {};
    \node (in8)  at (\xPos,-4.0)[label={[align=left,shift=(in8.west),anchor=west, font=\Large]$\;{u}_{8}$}] {};
    \node (in8)  at (\xPos,-4.5)[label={[align=left,shift=(in8.west),anchor=west, font=\Large]$\;{u}_{9}$}] {};
    \node (in10) at (\xPos,-5.0)[label={[align=left,shift=(in10.west),anchor=west, font=\Large]${u}_{10}={u}_{5}$}] {};
    \node (in11) at (\xPos,-5.5)[label={[align=left,shift=(in11.west),anchor=west, font=\Large]${u}_{11}$}] {};
    \node (in12) at (\xPos,-6.0)[label={[align=left,shift=(in12.west),anchor=west, font=\Large]${u}_{12}={u}_{6}$}] {};
    \node (in13) at (\xPos,-6.5)[label={[align=left,shift=(in13.west),anchor=west, font=\Large]${u}_{13}$}] {};
    \node (in14) at (\xPos,-7.0)[label={[align=left,shift=(in14.west),anchor=west, font=\Large]${u}_{14}$}] {};
    \node (in15) at (\xPos,-7.5)[label={[align=left,shift=(in15.west),anchor=west, font=\Large]${u}_{15}$}] {};

    \draw[dspconn, black] (u5.west) -- +(-0.4cm, 0) |- (u10); %
    \draw[dspconn, black] (u6.west) -- +(-0.6cm, 0) |- (u12);

\end{tikzpicture}}\label{subfig:polarGraph}
    }
    \hfil
    \subfloat[\footnotesize Polar Factor Tree]{%
        \resizebox{0.55\columnwidth}{!}%
        {\colorlet{REP}{rot!70!}%
\colorlet{SPC}{apfelgruen!70!}%
\colorlet{R0}{white}%
\colorlet{R1}{black}%
\colorlet{mI}{mittelblau!70!}%
\colorlet{dF}{orange!70!}%
\begin{tikzpicture}[%
    -,sloped,
    level 1/.style={sibling distance=4.8cm, level distance=0.8cm},
    level 2/.style={sibling distance=2.4cm},
    level 3/.style={sibling distance=1.2cm},
    level 4/.style={sibling distance=0.6cm},
    level 5/.style={sibling distance=0.3cm},
    level 6/.style={sibling distance=0.3cm},
    level 5/.style={sibling distance=0.3cm, level distance=0.6cm},
    square/.style={regular polygon,regular polygon sides=4},
    triangleV/.style={regular polygon, regular polygon sides=3,
            shape border rotate=180 },
    pentagonA/.style={regular polygon, regular polygon sides=5},
    pentagonV/.style={regular polygon, regular polygon sides=5,
            shape border rotate=180 },
    treenode/.style={align=center, circle, draw=black, fill=mittelgrau!40,
            inner sep = 0pt, minimum size=16pt},
    node_info/.style={treenode, circle, white, draw=black, fill=R1},
    node_frozen/.style={treenode, circle, black, fill=R0},
    node_r_info/.style={treenode, circle, black, draw=black, fill=mI},
    node_d_frozen/.style={treenode, circle, black, fill=dF},
    node_rep/.style={treenode, square, black, fill=REP, minimum size=21pt}, 
    node_spc/.style={treenode, diamond, black, fill=SPC, minimum size=21pt},
    leaf/.style={align=center},
    ]%
    \node (A) at (-5.9,0)[inner sep=0,black,anchor=west] {$s=n=4$};
    \node (B) at ( 5.5,0)[inner sep=0] {$N_v=16$};
    \draw[dashed,black!30](A) -- (B);

    \node (A) at (-5.9,-0.8)[inner sep=0,black,anchor=west] {$s=3$};
    \node (B) at ( 5.5,-0.8)[inner sep=0] {$N_v=8$};
    \draw[dashed,black!30](A) -- (B);

    \node (A) at (-5.9,-1.6)[inner sep=0,black,anchor=west] {$s=2$};
    \node (B) at ( 5.5,-1.6)[inner sep=0] {$N_v=4$};
    \draw[dashed,black!30](A) -- (B);

    \node (A) at (-5.9,-2.4)[inner sep=0,black,anchor=west] {$s=1$};
    \node (B) at ( 5.5,-2.4)[inner sep=0] {$N_v=2$};
    \draw[dashed,black!30](A) -- (B);

    \node (A) at (-5.9,-3.2)[inner sep=0,black,anchor=west] {$s=0$};
    \node (B) at ( 5.5,-3.2)[inner sep=0] {$N_v=1$};
    \draw[dashed,black!30](A) -- (B);

    \begin{scope}[execute at begin node=$, execute at end node=$]
            \node at (0,0) [treenode]{0}
            child{
                node[treenode]{1}
                    child{node[node_frozen]{3}
                            child{node[node_frozen]{7}
                                    child{node[node_frozen]{15}
                                            child{node[leaf]{$\hat{u}_0$}edge from parent[draw=none]}
                                    }
                                    child{node[node_frozen]{16}
                                            child{node[leaf]{$\hat{u}_1$}edge from parent[draw=none]}
                                    }
                            }
                            child{node[node_frozen]{8}
                                    child{node[node_frozen]{17}
                                            child{node[leaf]{$\hat{u}_2$}edge from parent[draw=none]}
                                    }
                                    child{node[node_frozen]{18}
                                            child{node[leaf]{$\hat{u}_3$}edge from parent[draw=none]}
                                    }
                            }
                    }
                    child{node[node_spc]{4}
                            child{node[node_spc]{9}
                                    child{node[node_frozen]{19}
                                            child{node[leaf]{$\hat{u}_4$}edge from parent[draw=none]}
                                    }
                                    child{node[node_r_info]{20}
                                            child{node[leaf](5){$\hat{u}_{5}$}edge from parent[draw=none]}
                                    }
                            }
                            child{node[node_info]{10}
                                    child{node[node_r_info]{21}
                                            child{node[leaf](6){$\hat{u}_{6}$}edge from parent[draw=none]}
                                    }
                                    child{node[node_info]{22}
                                            child{node[leaf]{$\hat{u}_{7}$}edge from parent[draw=none]}
                                    }
                            }
                    }
            }
        child{node[treenode]{2}
                    child{node[node_rep]{5}
                            child{node[node_frozen]{11}
                                    child{node[node_frozen]{23}
                                            child{node[leaf]{$\hat{u}_{8}$}edge from parent[draw=none]}
                                    }
                                    child{node[node_frozen]{24}
                                            child{node[leaf]{$\hat{u}_{9}$}edge from parent[draw=none]}
                                    }
                            }
                            child{node[node_rep]{12}
                                    child{node[node_d_frozen]{25}
                                            child{node[leaf](10){$\hat{u}_{10}$}edge from parent[draw=none]}
                                    }
                                    child{node[node_info]{26}
                                            child{node[leaf]{$\hat{u}_{11}$}edge from parent[draw=none]}
                                    }
                            }
                            {
                                    edge from parent node[above, font=\footnotesize]{\leftarrow \alpha^l}
                                    edge from parent node[below, font=\footnotesize]{\mathbf{\beta}^l \rightarrow}
                            }
                    }
                    child{node[node_spc]{6}
                            child{node[node_spc]{13}
                                    child{node[node_d_frozen]{27}
                                            child{node[leaf](12){$\hat{u}_{12}$}edge from parent[draw=none]}
                                    }
                                    child{node[node_info]{28}
                                            child{node[leaf]{$\hat{u}_{13}$}edge from parent[draw=none]}
                                    }
                            }
                            child{node[node_info]{14}
                                    child{node[node_info]{29}
                                            child{node[leaf]{$\hat{u}_{14}$}edge from parent[draw=none]}
                                    }
                                    child{node[node_info]{30}
                                            child{node[leaf]{$\hat{u}_{15}$}edge from parent[draw=none]}
                                    }
                            }
                            edge from parent node[above, font=\footnotesize]{\alpha^r \rightarrow}
                            edge from parent node[below, font=\footnotesize]{\leftarrow \beta^r}
                    }
                    edge from parent node[above, font=\footnotesize]{\quad \alpha^v \rightarrow}
                    edge from parent node[below, font=\footnotesize]{\leftarrow \beta^v \qquad}
            };

            \draw[dspconn, black, thick] (5.south) -- +(0, -0.4cm) -| (10); %
            \draw[dspconn, black, thick] (6.south) -- +(0, -0.6cm) -| (12); %

    \end{scope}

\end{tikzpicture}
    }
	\caption{\footnotesize
        (a) Polar factor graph and (b) the corresponding \acs{PFT} for a
        $\mathcal C(16,7)$ with $\mathcal{R}=\{(5,10), (6,12)\}$. Simple 
        information and static frozen bits are colored in black and white,
        repeated information and dynamic frozen bits are marked in blue and 
        orange, respectively. Rate-0 and Rate-1 nodes are filled in white and
        black, \acs{REP} and \acs{SPC} nodes are shown as red squares and green
        rhombs, respectively.
    }
	\label{fig:polarGraphTree}
\end{figure*}
\section{Fast Simplified Successive Cancellation %
List Decoding with Dynamic Frozen Bits}
\label{sec:fastSSCLdf}

As \acp{PTPC} are decoded by \acf{SCL} decoding, we first review
the \ac{SCL} decoding algorithm with its state-of-the-art optimizations and the 
needed adaptions with respect to the pre-transformation. In the second part of
this section, the additional calculations needed for \ac{FSSCL} decoding with 
dynamic frozen bits are described.

\subsection{State of the Art}
\subsubsection{SCL Decoding}\hfill\break
\ac{SC}-based decoding is typically represented as traversal of a balanced 
binary tree, the \ac{PFT}~\cite{alamdar2011sscTree}, which is derived from the
polar factor graph. An example for a notional $\mathcal C(16,7)$ is given in 
\autoref{fig:polarGraphTree}. The root node receives the channel \acp{LLR},
defined as ${\operatorname{LLR}(y_i) = \log(\operatorname{P}(y_i | x_i=0) /
\operatorname{P}(y_i | x_i=1))}$, calculated for all $N$~received channel 
values~$\boldsymbol{y}$ with $i\in\mathbb Z_N$. A node~$v$ in layer $s$
receives a vector~$\boldsymbol{\alpha}^v$ of $N_v=2^s$~\acp{LLR} from its
parent node in layer $s+1$. The messages passed to the left and the right child
nodes, $\boldsymbol{\alpha}^l$ and $\boldsymbol{\alpha}^r$, are calculated by
$f$- and $g$-functions,
respectively. The partial-sum vector~$\boldsymbol{\beta}^v$ is calculated by
the $h$-function combining $\boldsymbol{\beta}^l$ and $\boldsymbol{\beta}^r$ 
and returned to the parent node. Thus, in \ac{SC}-based decoding, the \ac{PFT}
is traversed depth first with an inherent priority to the left child. In the
$N$ leaf nodes, $\boldsymbol{\beta}^v$ is set to the value of the estimated 
bit~$\hat{u}_i$ with $i\in \mathbb Z_N$ as
\begin{equation}
    \hat{u}_i =
    \begin{cases}
        0, &\text{if } i \in \mathcal{F} \text{ or } \alpha^v_i\geq 0 \\
        1, &\text{otherwise.}
    \end{cases}
    \label{eq:u_i}
\end{equation}

In \ac{SCL} decoding \cite{talvardyList}, lists of $L$~vectors are passed among 
the nodes of the \ac{PFT} instead of only passing vectors~$\boldsymbol{\alpha}$ 
and $\boldsymbol{\beta}$. Therefore, ${l\in\mathbb Z_L}$ is introduced
to index the $L$ concurrent paths, \eg the \ac{LLR} vectors of a node~$v$ are
denoted as~$\boldsymbol{\alpha}^v_l$. For an information bit estimation in 
layer~$s=0$ of the \ac{PFT}, the decoding path splits, \ie both possible 
values, $0$ and $1$, are considered. Consequently, an information bit 
estimation doubles the number of decoding paths. Unreliable paths must be 
rejected by sorting, when the number of paths exceeds $L$. For this purpose, 
every bit estimation updates a \ac{PM} to rate the reliability of each path. 
These \acp{PM} are initialized with $0$ in \ac{LLR}-based \ac{SCL} decoding 
\cite{balatsoukas2015LLRSCL}. The \ac{PM} of a path with
index~$p \in \mathbb Z_{2L}$ proceeding input path~${l\in\mathbb Z_L}$ is
updated for the $i$-th bit estimation by
\begin{equation}
    \text{PM}_p^{i} =
    \begin{cases}
        \text{PM}_l^{i-1}
        + \left|\alpha^v_{l,i} \right|,
            &\text{if } \beta^v_{l,i} \neq \text{HD}(\alpha^v_{l,i})\\
        \text{PM}_l^{i-1},
            &\text{otherwise,}
    \end{cases}
    \label{eq:pm}
\end{equation}
with \ac{HD} on~$\alpha^v_{l,i}$ defined as
\begin{equation}
    \text{HD}(\alpha^v_{l,i}) =
    \begin{cases}
        0 &\text{if } \alpha^v_{l,i} \geq 0\\
        1 &\text{otherwise.}
    \end{cases}
    \label{eq:hdd}
\end{equation}
Consequently, the \ac{PM} is a cost function and, thus, the smallest \acp{PM} 
values belong to the most probable paths which survive the sorting step. The
most probable path is selected as the output of the decoder after the last bit 
decision.

\subsubsection{Fast Simplified SCL Decoding}\hfill\break
To simplify \ac{SCL} decoding and reduce its latency, \ac{FSSCL} 
decoding \cite{hashemi2017flexfastsscl} prunes the \ac{PFT} at nodes
representing constituent codes, which can be decoded directly:
\begin{itemize}
    \item Rate-0 nodes (all bits are frozen, \eg node $3$ in \autoref{subfig:polarTree})
    \item \Ac{REP} nodes (all bits are frozen except the right-most one, \eg node $5$ in \autoref{subfig:polarTree})
    \item \Ac{SPC} nodes (all bits are information except the left-most one, \eg node $6$ in \autoref{subfig:polarTree}) 
    \item Rate-1 nodes (all bits are information, \eg node $14$ in \autoref{subfig:polarTree}).
\end{itemize}
The traversal of corresponding subtrees is not needed. Pruning these subtrees,
\ie omitting the respective computations, reduces the latency and the 
computational complexity.

To differentiate between
a \textit{global} and \textit{node-related} notation, we use $\mathcal{I}_v$ to 
denote the set of information bits in node $v$. The number of information 
bits within node $v$ is $|\mathcal{I}_v| \leq N_v$.
The bijection from the node-related indexing $i_v \in \mathbb Z_{N_v}$ of a
node~${v\in \mathbb Z_{2^{n-s}-1:2^{n-s+1}-1}}$ to the bit index 
$i \in \mathbb Z_N$ is given by
\begin{equation}
    i(i_v) = i_v + \underbrace{
        \overbrace{
            2^s \cdot v + \sum_{j=0}^{s-1}\left(2^j\right)
        }^{\substack{\text{left-most leaf node}\\\text{of node $v$}}}
        - N - 1}_\text{bit index of a leaf node}.
    \label{eq:node2bitIdx}
\end{equation}

The generation of candidate paths in the optimized nodes, \ie proper path
splitting, was addressed in numerous works, since it is usually infeasible and 
redundant to consider all ${L\cdot 2^{|\mathcal{I}_v|}}$ possible candidate 
paths. It was shown in \cite{hashemi2017flexfastsscl} that only 
$P_\text{Rate-1} = \min\LB L-1, N_v\RB$ and $P_\text{SPC} = \min\LB L, N_v\RB$
path splits are needed in Rate-1 and \ac{SPC} nodes of size $N_v$, respectively. Based on the inherent partial 
order of the candidates, their number can be further reduced within the list of 
candidate paths, as shown in \cite{johannsen2022rate1,johannsen2022spc} for 
both node types, respectively.

The \ac{PM} update of candidate path $p$ originating from input path $l$ is carried out
according to Eq.~(\ref{eq:pm}), which is restated as node-related formulation with 
the \acp{PM} of the input
paths $\text{PM}^v_l$
\begin{equation}
	\text{PM}^v_{p} =
	\text{PM}^v_{l} + \sum_{\substack{i_v=0\\
			\beta^v_{p,i_v}\neq \text{HD}\LB\alpha^v_{l,i_v}\RB}}^{N_v-1}
	\left|\alpha^v_{l,i_v} \right|.
	\label{eq:pm_new}
\end{equation}

\subsubsection{SCL Decoding with Dynamic Frozen Bits}\hfill\break
\acp{PTPC} can be efficiently decoded by \ac{SCL} decoding
\cite{DynamicFrozenBits, ParityCheckPolarCode}. The values of the dynamic
frozen bits are calculated according to Eq.~(\ref{eq:dynamicFrozenBit}) with the 
previously decoded $\hat{\boldsymbol{u}}_{0:d}$, whenever a bit with index
$d \in \mathcal{D}$ is decoded. For row-merged polar codes, according to 
\autoref{subsubsec:rowmerged}, $\hat{u}_d = \hat{u}_r$ of belonging 
decoding paths with $\LB r,d\RB \in \mathcal R$ and the \ac{PM} update
follows Eq.~(\ref{eq:pm}).

For \ac{FSSCL}, only the decoding of \ac{SPC} nodes with dynamic frozen was 
addressed so far in \cite{khebbou2023spcdynfrozen} with more details provided 
in \autoref{subsec:SPCDynFrozen}.

\subsection{\ac{FSSCL} Decoding with Dynamic Frozen Bits}
\label{subsec:fastSSCLdf}
For the decoding of \acp{PTPC}, the values of specified information bits
according to the pre-transformation matrix $\boldsymbol{T}$ are needed to set
the corresponding dynamic frozen bits. In the following, we restrict ourselves
to the description of row-merged polar code decoding. However, the described 
steps can be generalized to \acp{PTPC} by replacing the single repeated
information bit index $r$, $(r,d) \in \mathcal{R}$, by all relevant indices 
according to $\boldsymbol{T}$ and the corresponding dynamic frozen bit value
given by Eq.~(\ref{eq:dynamicFrozenBit}).

\subsubsection{Repeated Information and Dynamic Frozen Bits}
\label{subsec:repInfoDynFrozen}\hfill\break
In \ac{FSSCL} with a pruned \ac{PFT}, the bit decisions are made in the nodes 
$v$ of layers $s>0$, resulting in candidate codewords (partial sums) of the corresponding
constituent codes. Thus, the values of the information bits $\hat{u}_i$,
$i \in \mathcal I$ are not directly known. However, the repeated information 
bits $\hat{u}_r$ need to be stored for the decoding of the corresponding 
dynamic frozen bits~$\hat{u}_d$ with ${d \in \mathcal D}$ and
$(r,d) \in \mathcal R$. 
The example of \autoref{fig:polarGraphTree} illustrates two row-merges 
$\mathcal{R}=\{(5,10), (6,12)\}$, where $\hat{u}_6$ and $\hat{u}_5$ are decoded
in the \ac{SPC} node $4$ and set the dynamic frozen bits $\hat{u}_{10}$ and
$\hat{u}_{12}$ in the \ac{REP} node $5$ and the \ac{SPC} node $6$, respectively.
The relevant information bits are extracted from the partial sums
$\boldsymbol{\beta}^v_l$ after applying the polar transform as
\begin{equation}
    \hat{u}_{l,r} = \left(\boldsymbol \beta_l \cdot
        \boldsymbol G_{N_V}\right)_{i_v} \,\Big\vert\,
        i_v\in\mathbb Z_{N_v}:i(i_v)=r.
    \label{eq:repeatedinformation}
\end{equation}
This step is called \ac{IBE}.

In general, an optimized node $v$ of size $N_v=2^s$ can contain different
numbers of frozen bits. The (row-merge) pre-transformation leads to different
constellations of dynamic and static frozen bit positions within these nodes.
To consider the repeated information in the dynamic frozen bits, we introduce
the dynamic frozen vector $\boldsymbol{\delta}^v_l$ of length $N_V$, dedicated
to each of the $L$ decoding paths. The bit values in $\boldsymbol{\delta}$ are
set according to
\begin{equation}
    \delta^v_{l,i_v} = 
    \begin{cases}
        \hat u_{o_r(l),r} & \text{if }
                            i(i_v)=d \in \mathcal D,\,
                             (r,d) \in \mathcal R\\
        0                 & \text{otherwise},
    \end{cases}
    \label{eq:dynamicfrozenvector}
\end{equation}
with $o_r(l)\in\mathbb Z_{L}$ being the origin path index at the info
bit index $r$ of the current dynamic frozen bit index $d$ and of the current 
path $l$. The recovery of the dynamic frozen vector $\boldsymbol \delta^v_l$
is denoted by \ac{DR}.

To transform the dynamic frozen vector $\boldsymbol{\delta}^v_l$ to the partial
sum level of node $v$, it is encoded according to Eq.~(\ref{eq:polarEncoding}) for
all $l \in \mathbb Z_L$, resulting in the dynamic frozen partial sum 
\begin{equation}
    \boldsymbol \chi^v_l = \boldsymbol \delta^v_l \cdot \boldsymbol G_{N_V}.
    \label{eq:dynamicfrozenpartialsum}
\end{equation}

\subsubsection{Rate-0 Nodes}
\label{subsec:Rate0DynFrozen}\hfill\break
For Rate-0 nodes, the partial sums $\boldsymbol \beta^v_l$ are simply set to
$\boldsymbol \chi^v_l$, corresponding to the case of static frozen bits where
they are set to the all-zero vector of size $N_v$:
\begin{equation}
    \boldsymbol \beta^v_l = \boldsymbol \chi^v_l.
    \label{eq:rate0DynFrozen}
\end{equation}
The \acp{PM} are updated according to Eq.~(\ref{eq:pm_new}).

\subsubsection{Repetition Nodes}
\label{subsec:REPDynFrozen}\hfill\break
The decoding of \ac{REP} nodes with dynamic frozen bits follows the rules of the
static frozen bit case with one additional constraint: Similar to the update 
rule in the $g$-function, where the bit values $\boldsymbol{\beta}^l$ from the
left child node determine the sign of the first \ac{LLR}-input, in \ac{REP}
nodes, the values of the (dynamic) frozen bits determine, whether an addition or
a subtraction of the corresponding \ac{LLR} value needs to be carried out to
calculate the decision \ac{LLR} of the information bit. Furthermore, the bit
values in the encoded dynamic frozen vector $\boldsymbol{\chi}^v_l$ 
\textit{XOR} the bit decision for the information bit determine the values of 
the partial sums:
\begin{equation}
    \beta^v_{l,i} = 
        \text{HD}\left(\sum\limits_{j=0}^{N_V-1}
            \left(\alpha^v_{l,j} \cdot 
            \left(-1\right)^{\chi^v_{l,j}}
        \right)\right) \oplus \chi^v_{l,i}.
    \label{eq:repDynFrozen}
\end{equation}
To split the decoding paths, all bits in the partial sum vectors are flipped as
\begin{equation}
    \beta^v_{l+L,i} = 
        \beta^v_{l,i} \oplus 1.
    \label{eq:repCandDynFrozen}
\end{equation}
Again, the \acp{PM} for all resulting $2L$ candidate paths are updated
according to Eq.~(\ref{eq:pm_new}).

\subsubsection{Single Parity Check Nodes}
\label{subsec:SPCDynFrozen}\hfill\break
The decoding of \ac{SPC} nodes with a static frozen bit $\hat u_f=0$, 
$f\in \mathcal{F}$ is based on the fulfillment of the even parity constraint 
induced by $\hat u_f$. Decoding \ac{SPC} nodes with a dynamic frozen bit is 
addressed in \cite{khebbou2023spcdynfrozen} and can be summarized as follows:
The value of the (dynamic) frozen bit determines, whether an even ($\hat u_d=0$) 
or an odd ($\hat u_d=1$) parity constraint needs to be fulfilled. This can be 
achieved by incorporating $\hat u_d$ in the parity calculation of the input 
paths as
\begin{equation}
    \gamma^v_l = \bigoplus_{i=0}^{N_v-1}\left(
                \text{HD}\left(\alpha^v_{l,i}\right)\right) 
                \oplus \hat u_{o_r(l),d}.
    \label{eq:parity_dynfrozen}
\end{equation}
The rest of the decoding of the \ac{SPC} nodes proceeds as in the static frozen 
bit case.

\begin{figure*}
    \centering
	\resizebox{\columnwidth}{!}{\input{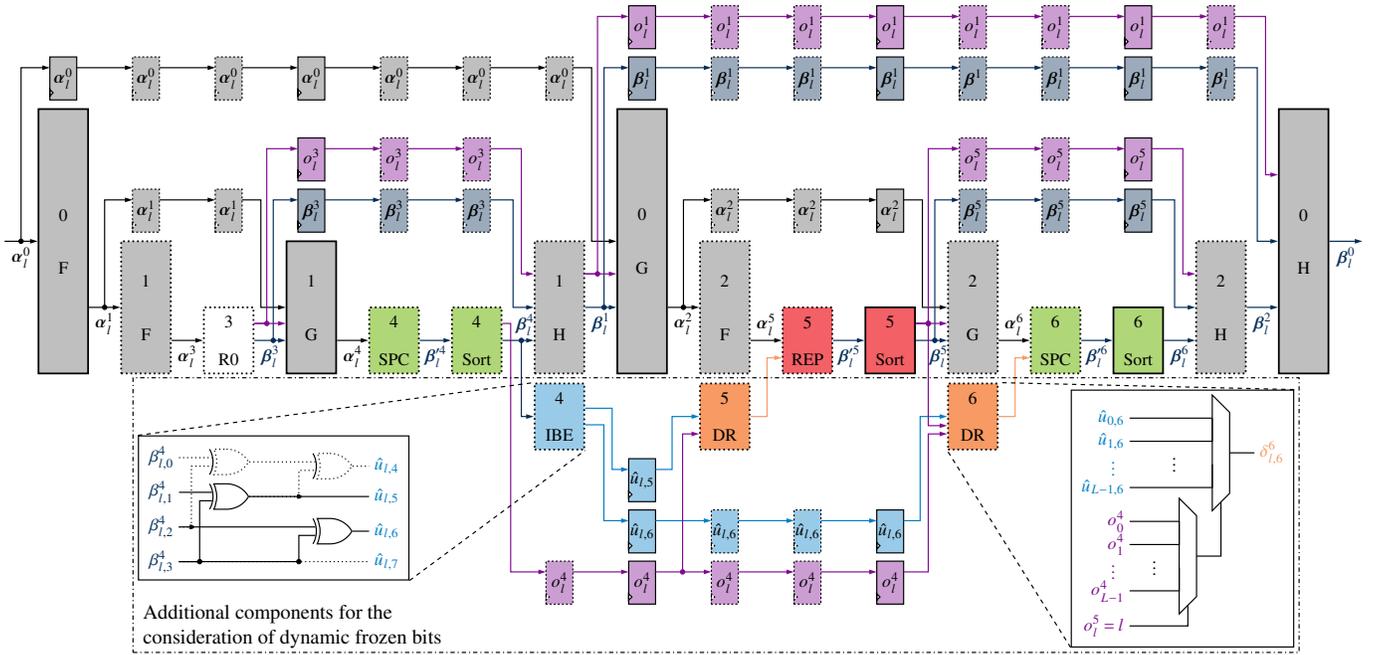}}
    \caption{\footnotesize 
        Simplified representation of a fully unrolled and fully pipelined
        \ac{FSSCL} decoder architecture for a $\mathcal C(16,7)$ with 
        $\mathcal{R}=\{(5,10), (6,12)\}$. Solid shapes represent stages with
        output registers and registers which are set in the delay lines. Stages
        with dotted shapes are combinatorial and the corresponding delay line
        registers are omitted.
        The coloring of the stages (kernels implementing the decoding functions)
        corresponds to \autoref{subfig:polarTree}, registers and signals for
        \acp{LLR} ($\boldsymbol{\alpha}$), partial sums~($\boldsymbol{\beta}$),
        merged information bits ($\hat{u}$) and path pointers ($o$) are colored
        in gray, dark blue, light blue and purple, respectively.
    }
    \label{fig:scl_architecture}
\end{figure*}

\section{Decoder Architecture and Hardware Implementation}
\label{sec:hardware}

This section presents a fully pipelined and unrolled architecture template for 
\ac{FSSCL} decoders with dynamic frozen bits. The architecture is based on
building blocks which implement computational kernels corresponding to the 
decoding functions executed during
the traversal of the \ac{PFT}. Again, the description is restricted to
row-merged polar codes, but can be generalized to \acp{PTPC}.

A simplified representation of an example \ac{FSSCL} decoder architecture for a
$\mathcal C(16,7)$ with $\mathcal{R}=\{(5,10), (6,12)\}$, which corresponds to
the example of \autoref{fig:polarGraphTree}, is shown in
\autoref{fig:scl_architecture}. The given example architecture is simplified with respect
to the omitted representation of the dimension of $L$ and some signals like the
clock and the \acp{PM}.
The delay lines are 
represented by shift registers. However, dependent on the memory depth, more
efficient memory blocks implementing circular buffers based on clock-gated
registers are used in implemented decoders. 
Furthermore, \autoref{fig:scl_architecture} shows a deeply pipelined 
architecture, \ie every stage retains the output signals in registers, 
resulting in a huge overall memory requirement. To reduce the memory
footprint, it was shown in~\cite{kestel2020unrolledList}, that fully pipelined
architectures with balanced register insertion in the pipeline and the delay 
line are more efficient than partially pipelined architectures, eliminating 
only registers in the delay line. Register balancing is based on a timing 
characterization of the building blocks and an automated timing engine, 
which inserts registers to match the delay constraints for a given target 
frequency. This register balancing is also indicated in the example architecture
in \autoref{fig:scl_architecture}: Dotted borders represent combinatorial
stages (without output register) where the corresponding registers in the delay
lines are omitted.

To build a decoder for a \ac{PTPC}, 
according to \autoref{subsec:fastSSCLdf}, 
some of the computational kernels of a plain polar 
code decoder need to be adapted and additional ones are needed:

The additional kernels correspond to the \ac{IBE} and the \ac{DR}, as described
in \autoref{subsec:repInfoDynFrozen}. Furthermore, corresponding delay lines of
the needed bit and path pointer signals are needed. The building blocks of the 
additional kernels and the additional delay lines are shown under the main 
pipeline of the decoder in the lower part of \autoref{fig:scl_architecture}. 

The \ac{IBE} kernel implements 
Eq.~(\ref{eq:repeatedinformation}) and is basically the polar transform of the
input partial sum vectors. 
The corresponding building block is highlighted in light blue.
The detail view of this block in the bottom left
corner of \autoref{fig:scl_architecture} shows the corresponding circuit with 
the needed \textit{XOR} gates for the polar transform to extract $\hat u_{l,5}$ and 
$\hat u_{l,6}$. Note, that only the \textit{XOR} gates connected to a used
output are needed, \ie the sub-circuit depends on $\mathcal{R}$. More formally, 
for all stages of leafs containing the bit indices $r$ with 
$(r,d) \in \mathcal R$, an \ac{IBE} kernel is instantiated to push 
$\hat{\boldsymbol u}_r$ into an information bit delay line.

The \ac{DR} kernels map the repeated information bits of
the origin paths to the bit positions in the dynamic frozen vector $\boldsymbol \delta^v_l$ of path $l$
as described in Eq.~(\ref{eq:dynamicfrozenvector}). Thus, the path pointers of all
intermediate leaf nodes with sorters between the leafs corresponding to bit
indices $r$ and $d$, $(r,d) \in \mathcal R$, are needed for every dynamic frozen
bit in node~$v$, indexed by $d=i(i_v)$ with $i_v \in \mathbb Z_{N_v}$. 
The corresponding building blocks are colored in orange.
The detailed circuit view in the bottom right corner of
\autoref{fig:scl_architecture} indicates the cascades of multiplexers needed
to back-track the paths for one dynamic frozen bit. For every sorter stage from bit index $r$ to $d$, an
additional input port and multiplexer is needed in the \ac{DR} block 
corresponding to the node considering $\hat u_d$. 

The modified kernels correspond to the optimized leaf nodes with 
dynamic frozen bits, as described in \autoref{subsec:Rate0DynFrozen} to 
\autoref{subsec:SPCDynFrozen}. These kernels have an additional input for the 
dynamic frozen vectors $\boldsymbol \delta^v_l$, which are internally encoded
by the polar transform to observe the dynamic frozen partial sums
$\boldsymbol \chi^v_l$ according to Eq.~(\ref{eq:dynamicfrozenpartialsum}) in 
Rate-0 and \ac{REP} node kernels. $\boldsymbol \chi^v_l$ is then used in the
kernels according to Eq.~(\ref{eq:rate0DynFrozen}), Eq.~(\ref{eq:repDynFrozen}) and
Eq.~(\ref{eq:repCandDynFrozen}).
In the \ac{SPC} nodes, the calculation of $\boldsymbol \chi^v_l$ is unneeded
since the one $\hat{u}_{o_r(l),d} = \delta^v_{l,i_v}$ XOR $0$ stays unchanged 
and, according to Eq.~(\ref{eq:parity_dynfrozen}), the dynamic frozen bit value is
used directly in the parity calculation. As the rest of the node can be
processed as in the static frozen bit case, the parallel \ac{SPC} node
architecture presented in \cite{johannsen2022spc} can be adopted for that part
without changes.
\section{Results}
\label{sec:results}

\subsection{Partial Weight Spectra}
\toremove{
\subsubsection{Precoded RM Codes}
\toremove{
\begin{table*}[htp]
    \centering
    \resizebox{\linewidth}{!}{
        \footnotesize
        \caption{\footnotesize Polynomials $p(x)$ with minimum degree and fewest coefficients that achieve the minimum $A_{w_\mathrm{min}}$ of all $\operatorname{deg}(p(x)) \leq 20$ polynomials for \ac{PAC} codes with different $\RM\left(r,n\right)$ rate-profiles.}
        \input{tables/optimal_polynomials}
        \label{tab:pac_poly}
    }
\end{table*}
}
Using  Alg.~\ref{alg:error_coeff}, we can find \ac{PAC} codes with \ac{RM} rate profiles that are optimal with respect to their weight spectrum. 
For $2\le r < n-2$, $5 \le n \le 11$, optimal convolutional polynomials $p(x)$ of maximal degree 20 were found using exhaustive search and listed in octal notation in Tab.~\ref{tab:pac_poly}. 
If multiple polynomials achieved the minimum number $A_{w_\mathrm{min}}$, we list the one with lowest degree and fewest non-zero coefficients.
Following Thm.~\ref{thm:dmin}, the codes have the same minimum distance as the original \ac{RM} codes, i.e., $d_\mathrm{min}=2^{n-r}$.

\begin{figure}[htp]
	\centering
	\resizebox{\columnwidth}{!}{\input{figures/RM_random_precoding}}
	\caption{\footnotesize Number of $w_\mathrm{min}$-weight codewords of repetition, convolutional, and randomly pre-transformed $\RM(\frac{n-1}{2},n)$ codes with $32 \le N \le 524288$. 
 }
	\label{fig:RandomPrecoding}
\end{figure}
\comaz{Interestingly $l^*=r-2$.}

Next, we compare various precoder designs for \acp{PTPC} with \ac{RM} rate profiles. Fig.~\ref{fig:RandomPrecoding} shows the error coefficient $A_{w_\mathrm{min}}$ for different explicit and random precoders, as well as the related bounds for rate-$\nicefrac{1}{2}$ \ac{RM} codes with $5\le n \le 19$.
We randomly generated at least 100 pre-transform matrices for $n<19$ and one for $n=19$ and computed $A_{w_\mathrm{min}}$ using Alg.~\ref{alg:error_coeff}. We plot the range of the resulting values and their average. We see that the results match precisely the predictions of the probabilistic method from \cite{Li2021Weightspectrum} and thus, verified their accuracy. Moreover, the lower bound given in Eq.~(\ref{eq:lowerbound_rm}) closely approaches these two curves for large $N$. The order analysis given in \cite{li2023weightspecturmimprovement} predicts the correct order of magnitude, however consistently underestimates $A_{w_\mathrm{min}}$.
We want to emphasize that it is exactly the complexity reduction of our proposed algorithm that allows the explicit computation of the number of minimum-weight codewords for specific pre-transformations for codes as long as $N=524288$, which was not computationally feasible before.

The explicit code designs are the row-merged \ac{RM} codes using the heuristic in \cite{GelincikRowMerged}, the proposed row-merged heuristic from \todo{Section/Equation/...} 
and the optimized \ac{PAC} codes listed in Tab.~\ref{tab:pac_poly}. Overall, the row-merged codes using the proposed algorithm with the actual $A_{w_\mathrm{min}}$ as optimization criterion are much closer to the lower bound Eq.~(\ref{eq:lowerbound_rm}) than the designs from \cite{GelincikRowMerged}, whose $A_{w_\mathrm{min}}$ scales faster than polynomial with the $N$. The \ac{PAC} codes have even lower error coefficients and lie on top of the minimum achieved using random pre-transformations.

\comaz{Is the heuristic algorithm implementation correct?}
}

\begin{table}[htp]
    \centering
    \footnotesize
    \caption{\footnotesize Comparison of the achieved $A_{\dmin}$ for $\mathcal{RM}(r,n)$ rate-profile using our proposed row-merging algorithm (Alg.~\ref{alg:row_merge}) and the heuristic of \cite{GelincikRowMerged}.}
    \setlength{\tabcolsep}{3.5pt}
    \resizebox{\columnwidth}{!}{\begin{tabular}{crrcrrcrrcrr}
    \toprule
    & \multicolumn{2}{c}{$r=2$} && \multicolumn{2}{c}{$r=3$} && \multicolumn{2}{c}{$r=4$} && \multicolumn{2}{c}{$r=5$}\\
    \cmidrule{2-3} \cmidrule{5-6} \cmidrule{8-9} \cmidrule{11-12}
    n & Alg.~\ref{alg:row_merge} & \cite{GelincikRowMerged} && Alg.~\ref{alg:row_merge} & \cite{GelincikRowMerged} && Alg.~\ref{alg:row_merge} & \cite{GelincikRowMerged} && Alg.~\ref{alg:row_merge} & \cite{GelincikRowMerged}\\
    \midrule
    5 & 364 & 364 \\ 
    6 & 396 & 580 && 2328 & 3160\\
    7 & 396 & 660 && 2392 & 2884 && 15344 & 20528\\
    8 & 380 & 970 && 2328 & 3104 && 15664 & 40250 && 107664 & 191868\\
    9 & 396 & 1271 && 2328 & 3530 && 15360 & 40400 && 117122 & 1540925\\ 
    \bottomrule
\end{tabular}

}
    \setlength{\tabcolsep}{6pt}
    \label{tab:heuristic}
\end{table}

Using the proposed row-merging algorithm (Alg.~\ref{alg:row_merge}), we design various pre-transformed polar codes.
First, we consider the case of $\mathcal{RM}(r,n)$ rate-profiles, and compare the results to the design proposed in \cite{GelincikRowMerged}, for which the optimal value of the hyperparameter $l^*$ has been determined by exhaustive search.
Tab.~\ref{tab:heuristic} lists the achieved number of minimum-weight codewords $A_{\dmin}$ which have been computed using \cite[Alg.~1]{ZunkerTreeIntersection} for different values of $n$ and $r$.
Note that, following Thm.~\ref{thm:dmin}, the codes have the same minimum distance as the plain \ac{RM} codes, i.e., $d_\mathrm{min}=2^{n-r}$.
For all parameter combinations, our proposed selection of row-merges results in codes with a lower (or equal) value of $A_{\dmin}$ and hence, better (or the same) \ac{ML} performance compared to \cite{GelincikRowMerged}. 
In the case $n=9$, we restricted our proposed algorithm to evaluate only the two subsequent yet unused frozen rows for each $\dmin$-weight generator to reduce computational complexity, as discussed in Rem.~\ref{rk:simplification}.
Despite this simplification, the achieved difference to \cite{GelincikRowMerged} are particularly large for $n=9$. %

\begin{table}[htp]
    \centering
    \footnotesize
    \caption{\footnotesize Row-merges for the proposed \acp{PTPC}.}
    \setlength{\tabcolsep}{1pt}
    \resizebox{\columnwidth}{!}{\begin{tabular}{cl}
    \toprule
    Rate-profile & Row-merges\\
    \midrule
    \multirow{3}{*}{DE(128,60)@\SI{2.9}{\dB}} & \scriptsize$\{(29,34),(30,35),(43,70),(45,50),(46,73),(51,68)$\\
    & \scriptsize$\phantom{\{}(53,74),(54,69),(57,66),(58,67),(60,65),(75,100),$\\
    & \scriptsize$\phantom{\{}(78,81),(83,104),(85,98),(86,112),(92,97)\}$\\
    \midrule
    \multirow{5}{*}{DE(256,75)@\SI{0.1}{\dB}} & \scriptsize$\{(115,133),(117,134),(118,129),(121,131),(122,135),$\\
    & \scriptsize$\phantom{\{}(124,130),(157,162),(158,163),(167,201),(171,198),$\\
    & \scriptsize$\phantom{\{}(173,178),(174,208),(179,197),(181,194),(182,202),$\\
    & \scriptsize$\phantom{\{}(185,204),(186,195),(188,193),(199,232),(206,209),$\\
    & \scriptsize$\phantom{\{}(211,240),(213,226),(217,228),(218,225)\}$\\
    \midrule
    \multirow{3}{*}{DE(1024,512+7)@\SI{0}{\dB}} & \scriptsize$\{(720,769),(720,774),(736,770),(736,777),(808,833),$\\
    & \scriptsize$\phantom{\{}(808,912),(816,960),(834,904),(836,928),(840,900),$\\
    & \scriptsize$\phantom{\{}(848,898),(864,897)\}$\\
    \bottomrule
\end{tabular}
}
    \setlength{\tabcolsep}{6pt}
    \label{tab:row merges}
\end{table}

Next, we design row-merged polar codes taking also the performance under \ac{SCL} into account.
Hence, first the suitable rate-profiles are designed according to Sec. \ref{ssec:rateproofiledesign} and then, Alg.~\ref{alg:row_merge} is applied.
We compare our results to \ac{PAC} codes and 5G \ac{CRC}-aided polar codes.
The partial distance spectra of the codes are computed using \cite{Miloslavskaya2022partialweightdistribution} (5G+CRC) and \cite{PartialEnumPAC} (\ac{PAC} codes).
For row-merged polar codes, we generalized the method from \cite{PartialEnumPAC} to arbitrary \acp{PTPC}. In fact, we apply the same recursive exploration of polar cosets and use \cite{PartialEnumPAC} to find which sub-trees have to be explored to count the codewords of a certain weight.

\begin{table}[htp]
    \centering
    \footnotesize
    \caption{\footnotesize Partial weight spectra of $\mathcal C(128,60)$ polar codes.}
    \resizebox{\columnwidth}{!}{\begin{tabular}{ccrrrrr}
    \toprule
    Rate-profile & Pre-transform & $A_8$ & $A_{12}$ & $A_{16}$ & $A_{18}$\\
    \midrule
    \multirow{2}{*}{5G(128,60+11)} & Identity & 452 & 5194 & 242404 & 0\\ 
    & CRC-11 & 0 & 6 & 580 & 0\\
    \midrule
    \multirow{4}{*}{DE(128,60)@\qty{2.9}{\dB}} & Identity & 0 & 0 & 28952 & 0\\
    & Convolution & 0 & 0 & 2136 & 280\\
    & Rep. (Alg.~\ref{alg:row_merge}) & 0 & 0 & 2328 & 1114\\
    & Rep. (simpl.) & 0 & 0 & 2328 & 1284\\
    \midrule
    $\mathcal I_\mathrm{min}=\{27\}$ & Identity & 0 & 0 & 33048 & 0\\
    \bottomrule
\end{tabular}}
    \label{tab:weightspectra_128_60}
\end{table}

For the short block length regime, we design a $(128,60)$ code to be comparable to existing literature \cite{Kestel2023,SCAL}. 
However, unlike the code from \cite{Kestel2023,SCAL} with $\mathcal{I}_\mathrm{min}= \{ 27 \}$ which is designed to have a large automorphism group, the code design for row-merged codes is not required to be symmetric. As a rate-profile, we take the design from Ex.~\ref{ex:P60128} using \ac{DE} with design \ac{SNR} $\qty{2.9}{dB}$.
The set of row-merges $\mathcal{R}$ is obtained using Alg.~\ref{alg:row_merge} and listed in Tab.~\ref{tab:row merges}. 
For reference, we also exhaustively searched for the optimal \ac{PAC} code (with $\operatorname{deg}(p(x))\leq20$) convolutional precoding polynomial $1047_8$.
Tab.~\ref{tab:weightspectra_128_60} gives the partial weight spectra of these codes. 
The 5G polar code with 11-bit \ac{CRC} is also provided for reference. 
Note that for this code, the last 11 bits are frozen, i.e. the rate-profile is not decreasing. In fact, all minimum-weight cosets are pre-transformable and, hence, $\dmin$ can be increased. However, we see that the \ac{CRC} cannot fully remove all weight-12 codewords.
The \ac{DE}-based design and $\mathcal{I}_\mathrm{min}= \{27\}$ already have a minimum distance of $16$.
Both row-merge and convolutional precoding perform similarly in expurgating the base code and leave $2328$ and $2136$ minimum-weight codewords, respectively. 
Thus, they are expected to perform very similarly in terms of error-rate. Note that the simplified search from Rem.~\ref{rk:simplification} results in the same $\Admin$.

\begin{table}[htp]
    \centering
    \footnotesize
    \caption{\footnotesize Partial weight spectra of $\mathcal C(256,75)$ polar codes.}
    \resizebox{\columnwidth}{!}{\begin{tabular}{ccrrrrr}
    \toprule
    Rate-profile & Pre-transform & $A_{16}$ & $A_{24}$ & $A_{32}$ & $A_{34}$\\
    \midrule
    \multirow{2}{*}{5G(256,75+11)} & Identity & 452 & 5194 & 275656 & 0\\ 
    & CRC-11 & 0 & 6 & 664 & 0\\
    \midrule
    \multirow{4}{*}{DE(256,75)@\qty{0.1}{\dB}} & Identity & 0 & 0 & 46104 & 0\\
    & Convolution & 0 & 0 & 2136 & 32\\
    & Rep. (Alg.\ref{alg:row_merge}) & 0 & 0 & 2328 & 352\\
    & Rep. (simpl.) & 0 & 0 & 2328 & 352\\
    \bottomrule
\end{tabular} %
}
    \label{tab:weightspectra_256_75}
\end{table}

For a medium block length, we pick the parameters $N=256$ and $K=75$ as in Ex.~\ref{ex:P75256}.
Our \ac{PTPC} design uses a rate-profile form \ac{DE} at $\qty{0.1}{dB}$ with minimum information set $\mathcal I_\mathrm{min}=\{63,115,157,167\}$ and $d_\mathrm{min}=32$. 
Repetition and convolutional precoder with $\mathcal{R}$ as listed in Tab.~\ref{tab:row merges} and polynomial $2213_8$ (optimal for $\operatorname{deg}(p(x))\leq20$), respectively, have been constructed.
Tab.~\ref{tab:weightspectra_256_75} lists the partial weight spectra of the codes. 
Here, the simplification from Rem.~\ref{rk:simplification} results in the same $\Admin$ and $A_{34}$, but not the same code.

Using convolutional and row-merge precoders, the number of minimum-weight codewords is drastically reduced, while the 11-bit \ac{CRC} cannot fully eliminate all weight-$24$ codewords of the 5G polar code with originally $\dmin=16$. Still, the number of low-weight codewords (below weight 34) is lower than the proposed codes. Therefore, we expect the \ac{ML} bound of the 5G code to be slightly better in the low \ac{SNR} regime.

\begin{table}[htp]
    \centering
    \footnotesize
    \caption{\footnotesize Partial weight spectra of $\mathcal C(1024,512)$ polar codes.}
    \setlength{\tabcolsep}{5pt}
    \resizebox{\columnwidth}{!}{\begin{tabular}{ccrrrrr}
    \toprule
    Rate-profile & Pre-transform & $A_{16}$ & $A_{24}$ & $A_{28}$\\
    \midrule
    \multirow{2}{*}{5G(1024,512+11)} & Identity & 42012 & 9745930 & 10305472 \\ 
    & CRC-11 & 57 & 5655 & 7193\\
    \midrule
    \multirow{3}{*}{DE(1024,512+7)@\qty{0}{\dB}} & Identity & 3136 & 49664 & 16384\\ 
    & Rep. (Alg.~\ref{alg:row_merge}) & 0 & 11008 & 128\\
    & Rep. (simpl.) & 0 & 11532 & 400\\
    \bottomrule
\end{tabular}
}
    \label{tab:weightspectra_1024_512}
    \setlength{\tabcolsep}{6pt}
\end{table}

Finally, a rate-$\nicefrac{1}{2}$ code with $N=1024$ and $K=512$ is analyzed. For such long lengths, \ac{SC} decodability is the key criterion to obtain a good performance and we make use of the second proposed method, i.e., a non-decreasing rate profile. We use the base code from Ex.~\ref{ex:P5121024} obtained from \ac{DE} with design \ac{SNR} of $\qty{0}{dB}$ with $K'=512+7$ and freeze the last $7$ minimum-weight generators.
The weight spectrum of this code is given in Tab.~\ref{tab:weightspectra_1024_512}. 
Using the proposed Alg.~\ref{alg:rate_profile_design}, we can find row merges that eliminate all weight-16 and weight-20 codewords to achieve a minimum distance of 24. The row merges are listed in Tab.~\ref{tab:row merges}.
In contrast, the 5G code with the same parameters and \ac{CRC}-11 still contains some weight-16 codewords.
In this case, the simplification from Rem.~\ref{rk:simplification} would result in a row-merged code with $\dmin=24$ and $\Admin=11532$.

\subsection{Design Complexity}
\begin{table}[htp]
    \centering
    \footnotesize
    \caption{\footnotesize Running time required to select the row-merges using Alg.~\ref{alg:row_merge} and its simplification from Rem.~\ref{rk:simplification}.}
    \setlength{\tabcolsep}{4pt}
\begin{tabular}{cccc}
    \toprule
    Code & $\mathcal{C}(128,60)$ & $\mathcal{C}(256,75)$ & $\mathcal{C}(1024,512)$\\
    \midrule
    Alg.~\ref{alg:row_merge} & $\SI{1.83}{s}$ & $\SI{9.84}{s}$ & $\SI{191}{s}$ \\
    simplified (Rem.~\ref{rk:simplification}) & $\SI{582}{ms}$ & $\SI{1.57}{s}$ & $\SI{48.5}{s}$ \\
    \bottomrule 
\end{tabular}
\setlength{\tabcolsep}{6pt}
    \label{tab:designtime}
\end{table}
In Tab.~\ref{tab:designtime}, we list the required design time, i.e., the time to select the row-merge pairs for each of the three codes and compare Alg.~\ref{alg:row_merge} to its simplification from Rem.~\ref{rk:simplification}.
We use a single-core C implementation of \cite[Alg.~1]{ZunkerTreeIntersection} for enumerating the $\wmin$-weight codewords.
For the enumeration of codewords with higher weight, a non-parallel Numba-accelerated Python implementation of the method in \cite{PartialEnumPAC} is used.
As we can see, the restriction of the search speeds up the design by roughly a factor of $3$ to $6$. Furthermore, the use of the general enumeration algorithm of \cite{PartialEnumPAC} considerably increases the design time of the $\mathcal{C}(1024,512)$ compared to the smaller codes.

\subsection{Pre-Transform Complexity}
\begin{table}[htp]
    \centering
    \footnotesize
    \caption{\footnotesize Number of dynamic frozen bits $|\mathcal{D}|$ and required XOR operations $N_\mathrm{XOR}$ of the pre-transform. }
    \setlength{\tabcolsep}{4pt}
\begin{tabular}{crrcrrcrr}
    \toprule
    Code & \multicolumn{2}{c}{$\mathcal{C}(128,60)$} && \multicolumn{2}{c}{$\mathcal{C}(256,75)$} && \multicolumn{2}{c}{$\mathcal{C}(1024,512)$}\\
    \cmidrule{2-3}\cmidrule{5-6}\cmidrule{8-9}
    Pre-transform & $|\mathcal{D}|$ & $N_\mathrm{XOR}$ && $|\mathcal{D}|$ & $N_\mathrm{XOR}$ && $|\mathcal{D}|$ & $N_\mathrm{XOR}$ \\
    \midrule
    CRC-11 & 11 & 375 && 11 & 473 && 11 & 2960\\
    Convolution & 35 & 142 && 59 & 257 && --- & ---\\
    Repetition & 17 & 0 && 24 & 0 && 13 & 0 \\
    \bottomrule 
\end{tabular}
\setlength{\tabcolsep}{6pt}
    \label{tab:precoding checks}
\end{table}
Next, we compare the different pre-transforms in their computational complexity for encoding and decoding. 
While it is generally difficult to quantify the overhead introduced by the pre-transformation exactly, simplified metrics are the number of dynamic frozen bits $|\mathcal{D}|$ and the number of XOR operations required to compute each one according to Eq.~(\ref{eq:dynamicFrozenBit}).
Tab.~\ref{tab:precoding checks} lists these metrics for the previously introduced codes and precoders. Note that we assume the pre-transforms to be systematic in all cases.
As we can see, \ac{CRC} has a low number of dynamic checks, however, with many operations needed to compute them. 
\ac{PAC} codes on the other hand have a very high number of dynamic frozen bits (virtually all frozen bits after the first information bit are dynamic), and also many XOR operations are required. 
In contrast, our proposed row-merged polar codes have a comparably small number of dynamic frozen bits and do not need any computation overhead.

\subsection{Error-Correction Performance}\label{subsec:errorCorrection}
We evaluate the error-correction performance of the designed \acp{PTPC} with \ac{BPSK} mapping over an \ac{AWGN} channel using Monte-Carlo simulation with a minimum of 1000 block errors. All simulation results are obtained using floating-point and ``box-plus'' $f$-function implementation. 
For reference, we also show the saddle point approximations of the \ac{PPV} meta converse bound \cite{SaddlePointApproxMC} %
and the random coding union bound \cite{SaddlePointApproxRCU} (implementation based on \cite{MCandRCUimplementation}). 
Moreover, we find lower bounds on the \ac{ML} performance of the codes using by using box-and-match decoding \cite{bma} of order 4 for $N=128$ and \ac{SCL} decoding with growing list size (up to $L=65536$) for the longer codes. If a codeword closer to the received sequence than the transmitted codeword is found, this is recorded as an \ac{ML} error, while in all other cases we assume the \ac{ML} decoder would have succeeded. Hence, this is only a lower bound.
For the low \ac{BLER} regime where this is still infeasible, we obtain truncated union bounds from the partial weight spectra.

\begin{figure}[htp]
	\centering
	\resizebox{\columnwidth}{!}{\input{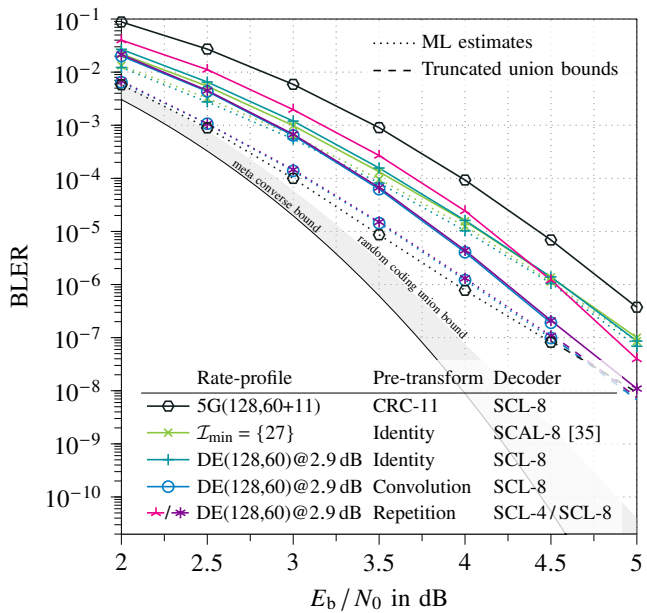}}
	\caption{\footnotesize \ac{BLER} comparison of $\mathcal C(128,60)$ polar codes (CRC-aided, plain, PAC and row-merged).}
    \label{fig:polar_128_bler}
\end{figure}

Fig.~\ref{fig:polar_128_bler} shows the \ac{BLER} performance of the $(128,60)$ codes.
The designed \acp{PTPC} outperform the 5G polar code by \SI{0.6}{dB} at a \ac{BLER} of $10^{-5}$ under \ac{SCL} decoding with $L=8$. As expected from the similar weight spectra, both the \ac{PAC} and the row-merged polar code have almost an identical performance. Compared to the $\mathcal{I}_\mathrm{min}= \{ 27 \}$ code under \ac{SCAL} decoding (with automorphism selection optimized at \SI{4}{dB} \cite{SCAL}), the row merged polar code only requires a list size $L=4$ to reach the same \ac{BLER} of $10^{-6}$. 
Note that the $\mathcal{I}_\mathrm{min}= \{ 27 \}$ code is mainly limited by its poor weight spectrum, as can be seen in the \ac{ML} performance. In fact, the plain polar code obtained using \ac{DE} is already slightly better than the $\mathcal{I}_\mathrm{min}= \{ 27 \}$ code which had to be selected with symmetry in mind.

\begin{figure}[htp]
	\centering
	\resizebox{\columnwidth}{!}{\input{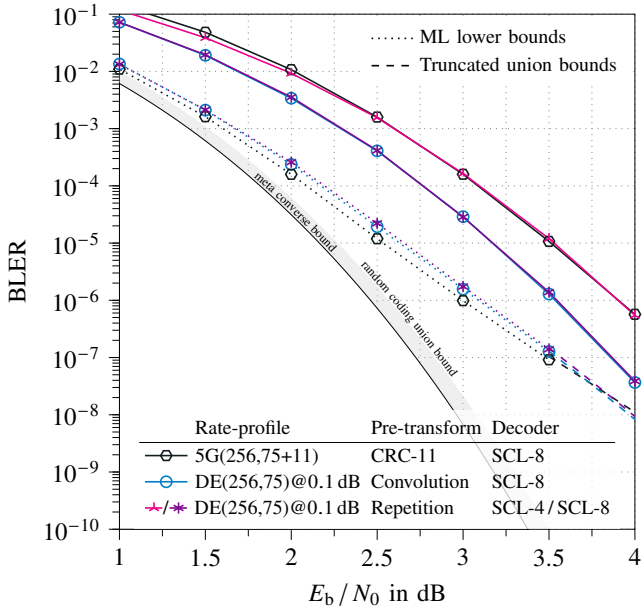}}
	\caption{\footnotesize \ac{BLER} comparison of $\mathcal C(256,75)$ polar codes (CRC-aided, PAC and row-merged).
    }
    \label{fig:polar_256_bler}
\end{figure}

Fig.~\ref{fig:polar_256_bler} shows the \ac{BLER} performance of the $(256,75)$ codes. Using \ac{SCL}-8, the proposed \acp{PTPC} outperform the 5G polar code by \SI{0.3}{dB} at a \ac{BLER} of $10^{-5}$. 
The advantage of the row-merged code is smaller, as the rate loss caused by the CRC-11 is lower relative to the code length than for $N=128$.
Again, repetition and convolutional pre-transform perform almost identically. The row-merged polar code reaches the same performance as the 5G polar code under \ac{SCL}-8 using only a list size of $L=4$.

\begin{figure}[htp]
	\centering
	\resizebox{\columnwidth}{!}{\input{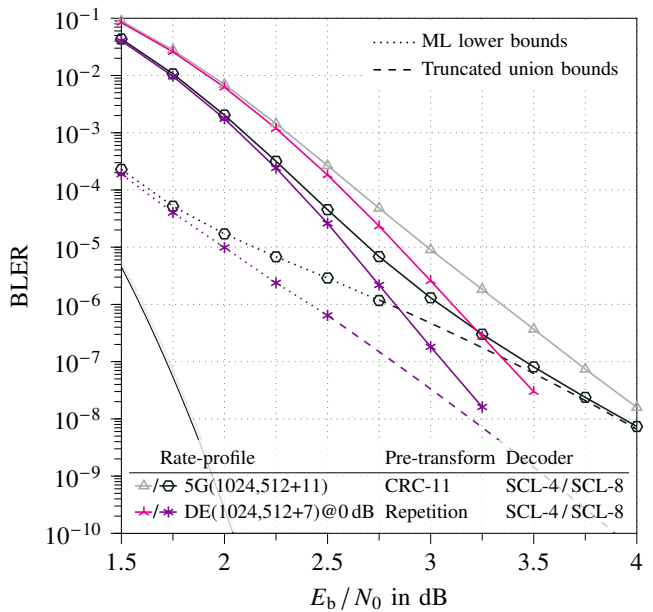}}
	\caption{\footnotesize \ac{BLER} performance of $\mathcal C(1024,512)$ polar codes (CRC-aided vs. row-merged).
    }
    \label{fig:polar_1024_bler}
\end{figure}

Finally, we show the \ac{BLER} performance of the designed $(1024,512)$ code in Fig.~\ref{fig:polar_1024_bler} under \ac{SCL}-4 and \ac{SCL}-8. We remark that the SCL-8 curve is only a lower bound, as it is obtained from the cases where SCL-4 outputs a codeword estimate that is neither correct nor more likely than the transmitted codeword.
The performance of the 5G \ac{CRC} aided polar code is clearly limited by the poor distance spectrum, which can be seen in the shallow truncated union bound.
In contrast, the proposed row-merged polar code exhibits a significantly better distance spectrum and, thus, outperforms the 5G code for all \ac{SNR} values for the same list size. Above $E_\mathrm{b}/N_\mathrm{0} = \SI{3.25}{dB}$, the row-merged code under \ac{SCL}-4 even outperforms \ac{SCL}-8 for the 5G code.

\subsection{Implementation Results}

In this section, we present unrolled and fully pipelined ASIC implementations 
of 8 different decoders for a target clock frequency of $f_\mathrm{clk}=\SI{500}{\mega\hertz}$.
Synthesis and \ac{PAR} were executed with the \textit{Synopsys} tools
\textit{Design Compiler} and \textit{IC-Compiler}, respectively. A \nm{12}
FinFET technology from \textit{GlobalFoundries} is used under worst case 
\ac{PVT} conditions (\SI{125}{\degreeCelsius}, \SI{0.72}{\volt}) for timing and 
nominal case \ac{PVT} (\SI{25}{\degreeCelsius}, \SI{0.8}{\volt}) for power. 
Power values stem from post-\ac{PAR} netlist simulations with back-annotated 
wiring data and test data at the $E_\mathrm{b}/N_0$ needed for a \ac{BLER} 
$=10^{-5}$.

\begin{table*}[htp]
    \centering
    \caption{\footnotesize
        Post-\ac{PAR} ASIC implementation results.
    }
    \label{tab:results_all}
    \footnotesize
    \begin{NiceTabular}{l|ccccc|ccc}
\toprule
    Code
    & \multicolumn{5}{c}{$\mathcal{C}(128,60)$}
    & \multicolumn{3}{c}{$\mathcal{C}(256,75)$}
    \\
    \midrule
    Rate-profile
    & 5G
    & $\mathcal I_{\mathrm{min}}\!=\!\{27\}$
    & \multicolumn{3}{c}{DE@\SI{2.9}{\dB}
    }
    & 5G
    & \multicolumn{2}{c}{DE@\SI{0.1}{\dB}
    } 
    \\
    \cmidrule{4-6}
    Pre-transform
    & CRC-11
    & Identity
    & Identity
    & \multicolumn{2}{c}{Repetition}
    & CRC-11
    & \multicolumn{2}{c}{Repetition} 
    \\
    \cmidrule{5-6}
    \cmidrule{8-9}
    Decoder
    & SCL-8
    & SCAL-8
    & SCL-8
    & SCL-8
    & SCL-4
    & SCL-8
    & SCL-8
    & SCL-4 
    \\
\midrule
Frequency $f_\mathrm{clk}$  / {(}MHz{)}       
    & 500 
    & 500 
    & 500 
    & 500 
    & 500 
    & 500 
    & 500 
    & 500 
\\
Throughput $T_\mathrm{c}$ / {(}Gbps{)}     
    &  64.0 
    &  64.0 
    &  64.0 
    &  64.0 
    &  64.0 
    & 128.0 
    & 128.0 
    & 128.0 
\\
Latency / {(}CC{)}          
    &  29 
    &  32 
    &  32 
    &  32 
    &  24 
    &  51 
    &  50 
    &  34 
\\
Latency / {(}ns{)}          
    &  58.0 
    &  64.0 
    &  64.0 
    &  64.0 
    &  48.0 
    & 102.0 
    & 100.0 
    &  68.0 
\\
Area $A$ / {(}mm$^2${)}         
    & 0.303 
    & 0.381 
    & 0.289 
    & 0.299 
    & 0.115 
    & 0.750 
    & 0.737 
    & 0.288 
\\
\textbf{Area Efficiency $\mu_\mathrm{A}$ / {(}Gbps/mm$^2${)}}
    & \textbf{211.5}
    & \textbf{168.2}
    & \textbf{221.1}
    & \textbf{214.0}
    & \textbf{556.0}
    & \textbf{170.7}
    & \textbf{173.8}
    & \textbf{445.1}
\\
Utilization / {(}\%{)}  
    & 74 
    & 73
    & 73 
    & 74 
    & 73 
    & 73 
    & 73 
    & 75 
\\
Power Total $P$ / {(}W{)}        
    & 0.494
    & 0.575
    & 0.429
    & 0.499
    & 0.190
    & 1.056
    & 1.083
    & 0.541
\\
\textbf{Energy Efficiency $\mu_\mathrm{E}$ / {(}pJ/bit{)}}
    & \textbf{7.73}
    & \textbf{8.99}
    & \textbf{6.71}
    & \textbf{7.79}
    & \textbf{2.97}
    & \textbf{8.25}
    & \textbf{8.46}
    & \textbf{4.23}
\\
Power Density / {(}W/mm$^2${)}
    & 1.63
    & 1.51
    & 1.48
    & 1.67
    & 1.65
    & 1.41
    & 1.47
    & 1.88 
\\
\midrule
$E_\mathrm{b}/N_0$ / (dB) @ BLER $10^{-5}$ 
    & 4.43
    & 4.10
    & 4.10
    & 3.85
    & 4.16
    & 3.52
    & 3.18
    & 3.52
\\
\bottomrule
\end{NiceTabular}

\end{table*}

All decoders use the hardware-friendly min-sum formulations of the $f$-function 
\cite{leroux2011SCdecHW}, a quantization of \SI{6}{\bit} for channel and 
internal \acp{LLR} and \SI{8}{\bit} for \acp{PM}. %
The threshold values for Rate-1 \cite{johannsen2022rate1} and \ac{SPC} nodes 
\cite{johannsen2022spc} were chosen to limit the performance loss to 
$\dB{0.05}$
at a \ac{BLER} of $10^{-5}$ compared to the floating-point results presented in 
\autoref{subsec:errorCorrection}.

The presented values for the coded throughput are given by $T_\mathrm{c}=f_\mathrm{clk}\cdot N$. For the comparisons, we focus on the metrics area efficiency 
$\mu_\mathrm{A}=T_\mathrm{c}/A$ and energy efficiency $\mu_\mathrm{E}=P/T_\mathrm{c}$, with $A$ being the 
area and $P$ the power of the implementation. The power density is a derived 
metric calculated as $P/A=\mu_\mathrm{E}\cdot\mu_\mathrm{A}$. 

\autoref{tab:results_all} shows the post-\ac{PAR} results of decoders for the
selected codes with code lengths $N \in \{128, 256\}$ described and
analyzed in the previous sections, except from the \ac{PAC} codes. We omit the
implementation of the \ac{PAC} code decoders, since \autoref{tab:precoding checks}
shows the higher computational complexity of the convolution pre-transformations
compared to the row-merged codes.

\subsubsection{$N=128$}\hfill\break
Comparing the implementation costs of \ac{SCL}-$8$ decoders for the 5G CRC-11
code and the row-merged polar code, both decoders show almost equal values in 
$\mu_\mathrm{E}$ and $\mu_\mathrm{A}$, although the different codes cause appreciable differences 
in the \acp{PFT} and the corresponding unrolled architectures. While the decoder
for the row-merged code shows a negligible better $\mu_\mathrm{A}$ than the decoder for 
the 5G CRC-11 code, the advantage is vice versa for $\mu_\mathrm{E}$.

In comparison to the \ac{SCAL}-$8$ decoder for the code with 
$\mathcal I_\mathrm{min} = \{27\}$, the row-merged \ac{SCL}-$8$ decoder is 
superior in $\mu_\mathrm{A}$ and $\mu_\mathrm{E}$ by $\times 1.27$ and $\div 1.15$,
respectively. The main reason here is the complete 
initialization of the list in the \ac{SCAL} decoder causing additional area and 
power consumption, which is not needed for \ac{SCL} decoders.

To show the overhead caused by the row-merges, we also implemented the 
\ac{SCL}-$8$ decoder for the plain polar code~({DE@\dB{2.9}}) without the 
row-merges. The observed overhead is 
$\times 1.03$ %
for $\mu_\mathrm{A}$ and 
$\div 1.16$ %
for $\mu_\mathrm{E}$. 
An important contribution to the increased power consumption stems from the 
additional logic needed for the incorporation of the dynamic frozen bits.

Due to the very good error-correction capability of the row-merged polar code,
it is also possible to reduce the list size to $L=4$. The \ac{SCL}-$4$ decoder 
exhibits an improvement of
$\times 2.63$ %
in $\mu_\mathrm{A}$ and 
$\div 2.60$ %
in $\mu_\mathrm{E}$ 
compared to the 5G reference decoder, while still providing a gain of \dB{0.3}
at a \ac{BLER} of $10^{-5}$.

\begin{figure}
    \centering
    \subfloat[\footnotesize SCL-8: \mmsq{0.750}]{%
        \includegraphics[height=0.61\columnwidth]%
        {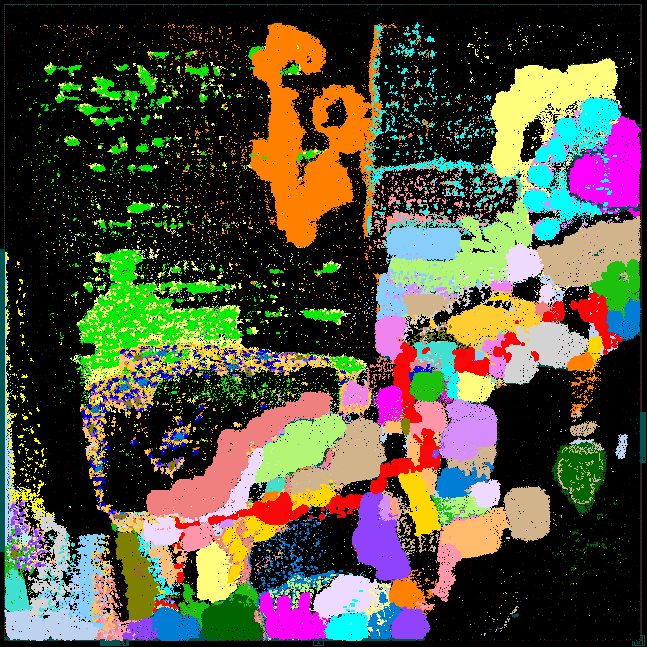}%
        \label{fig:layout_5g_256_scl8}%
    }%
    \hfill
    \centering
    \subfloat[\footnotesize SCL-4: \mmsq{0.288}]{%
        \includegraphics[height=0.38\columnwidth]%
        {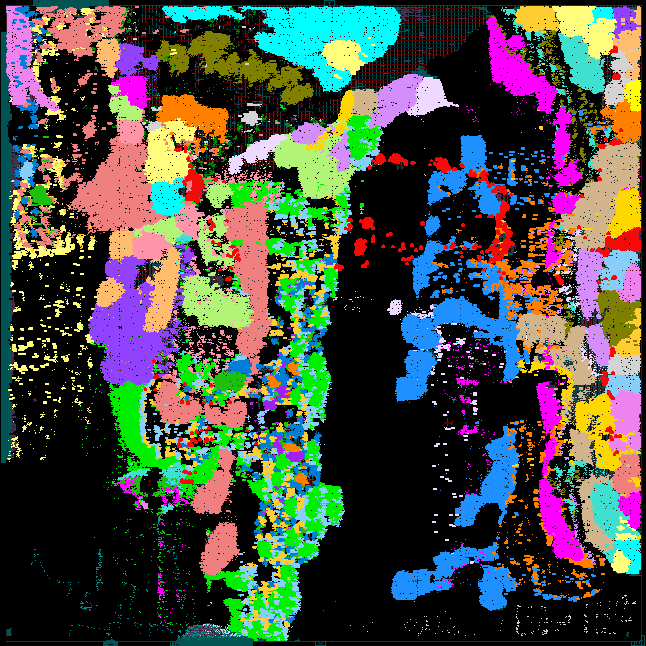}%
        \label{fig:layout_rowmerged_256_scl4}%
    }%
    \caption{\footnotesize
        Layouts with equal area scaling showing the (a) $\mathcal C(256, 75)$ 
        5G CRC-11 code \ac{SCL}-$8$ decoder and (b) the row-merged code
        \ac{SCL}-$4$ decoder providing similar \ac{BLER} performance;
        different colors represent computational kernels, delay line memory is
        colored in black, area portions related to the pre-transformations
        (including memory) is highlighted in red.
    }\label{fig:layouts_256}
\end{figure}

\subsubsection{$N=256$}\hfill\break
For the decoders of the $\mathcal C(256,75)$ codes, the relations in the 
implementation costs are comparable to the ones for the $\mathcal C(128,60)$ 
codes.
Again, the \ac{SCL}-$8$ decoders for the 5G polar code with \ac{CRC}-$11$ and 
the row-merged polar code show similar values for $\mu_\mathrm{A}$ and $\mu_\mathrm{E}$ with a 
gain of \dB{0.3} at a \ac{BLER} of $10^{-5}$.
With a list size reduction to $L=4$, the row-merged code under
\ac{SCL}-$4$ decoding achieves equivalent error-correction performance as the
\ac{SCL}-8 decoder for the 5G CRC-11 code (\autoref{fig:polar_256_bler}),
but with significant improvements in the implementation costs. We observe a 
$\times 2.61$ %
higher $\mu_\mathrm{A}$ and a
$\div 1.95$ %
improvement in $\mu_\mathrm{E}$. 
The smaller area demand of the row-merged \ac{SCL}-$4$ decoder, compared
to the \ac{SCL}-$8$ decoder with CRC-11 providing equal \ac{BLER} performance,
is also visible in the layouts shown in \autoref{fig:layouts_256} with equal 
area scaling. The area occupied by the components related to the 
pre-transformations is highlighted in red. These components are the 
\ac{CRC}-updates and corresponding check-sum memory for the 5G code and the 
\ac{IBE}, \ac{DR} and the corresponding memory for the repeated information 
bits and the path pointers for the row-merged code.

\section{Conclusion and Outlook}\label{sec:conc}
In this paper, we propose methods to construct row-merged polar codes based on the analysis of the formation of minimum-weight codewords of \acp{PTPC}.
As low-complexity weight enumeration algorithms for these codes are now available, our optimized construction algorithms can directly take the actual number of minimum-weight codewords into account in the code design.
The proposed algorithms cover both the construction of the rate-profile and the selection of the row-merges.
Moreover, a corresponding fast simplified \ac{SCL} decoding algorithm is proposed and demonstrated in a fully pipelined hardware implementation.
Numerical results show that the optimized row-merged polar codes outperform \ac{CRC}-aided polar codes in error correcting performance and implementation costs.
The decoding algorithm and its implementation are not limited to row-merged codes but also applicable to other types of \acp{PTPC}.
However, as row-merged polar codes show identical error-correcting performance as \ac{PAC} codes, they should be preferred to \ac{PAC} codes due to their lower complexity overhead.

As the set of row-merges has a direct impact on the circuitry of the decoder, future work includes the optimization of the row-merges to further reduce the hardware complexity, e.g., by favoring groups of dynamic frozen bits in the same leaf node of the \ac{PFT}.

\appendices

\ifCLASSOPTIONcaptionsoff
  \newpage
\fi

\bibliographystyle{IEEEtran}
\bibliography{main}

\end{NoHyper}
\end{document}